\titlespacing{\section}{0pt}{1pt plus 0.5pt minus 0.5pt}{0pt plus 0.5pt minus 0.5pt}
\titlespacing{\subsection}{0pt}{1pt plus 0.5pt minus 0.5pt}{0pt plus 0.5pt minus 0.5pt}
\pretocmd{\section}{\ifdim\pagetotal>0pt\vspace*{-\parskip}\fi}{}{}
\pretocmd{\subsection}{\ifdim\pagetotal>0pt\vspace*{-\parskip}\fi}{}{}
\newcommand{\strow}[2]{%
  \makebox[1.8em][l]{%
    $\ifdim #2pt<0.327pt \mathord{*}\fi%
      \ifdim #2pt<0.2725pt \kern0.05em\mathord{\dagger}\fi%
      \ifdim #2pt<0.1635pt \kern0.05em\mathord{\bullet}\fi$%
  }%
  #1 & #2\\%
}
\setlist[enumerate]{topsep=0pt}
\setlist[itemize]{nosep}       
\setlist[enumerate]{nosep}
\setlist[itemize]{topsep=0pt}
\newcommand*\interior[1]{#1^{\mathsf{o}}}
\tikzset{
    -Latex,auto,node distance =1 cm and 1 cm,semithick,
    state/.style ={ellipse, draw, minimum width = 0.7 cm},
    point/.style = {circle, draw, inner sep=0.04cm,fill,node contents={}},
    bidirected/.style={Latex-Latex,dashed},
    el/.style = {inner sep=2pt, align=left, sloped}
}
\newcommand{\indep}{\mathrel{\text{\scalebox{1.07}{$\perp\mkern-10mu\perp$}}}}
\newtheoremstyle{mystyle}
  {2pt} 
  {-3pt} 
  {\itshape} 
  {} 
  {\bfseries} 
  {.} 
  { } 
  {} 
\theoremstyle{mystyle}
\newtheorem{theorem}{Theorem}[section]
\newtheorem{assumption}{Assumption}
\newtheorem{corollary}{Corollary}
\newtheorem{definition}{Definition}[section]
\newtheorem{example}{Example}
\newtheorem{lemma}{Lemma}
\newtheorem{proposition}{Proposition}[section]
\newtheorem{remark}{Remark}
\definecolor{section_color}{rgb}{0.35,0.0,0}
\definecolor{MyGray}{rgb}{0.96,0.97,0.98}
\renewenvironment{proof}[1][\proofname]{%
  \par\addvspace{0pt}
  \pushQED{\qed}%
  \normalfont                
  \topsep0pt \partopsep0pt \itemsep0pt \parsep0pt%
  \trivlist
  \item[\hskip\labelsep\itshape #1\@addpunct{.}]
  \ignorespaces
}{%
  \popQED\endtrivlist\@endpefalse
  \addvspace{0pt}
}
\providecommand{\keywords}[1]
{
  \small	
  \textbf{Keywords:} #1
}
\providecommand{\JEL}[1]
{
  \small	
  \textbf{JEL codes:} #1
}
\newcommand*{\thisdraft}{This draft: January 2026} 
\newcommand*{\firstdraft}{First draft: May 2021}  
\title{A Simple Measure of Robustness for External Validity under Covariate Shifts}
\author{Pietro Emilio Spini\textsuperscript{1}}
\date{
\thisdraft \\
\firstdraft}
\begin{document}

\setlength{\abovedisplayskip}{0.25cm}
\setlength{\belowdisplayskip}{0.25cm}
\setlength{\abovedisplayshortskip}{0.25cm}
\setlength{\belowdisplayshortskip}{0.25cm}

\begin{singlespace} 
\footnotetext[1]{Email: \href{mailto:pietro.spini@bristol.ac.uk}{pietro.spini@bristol.ac.uk}. University of Bristol, 12 Priory Road, BS8 1TU, UK. \\ I thank Yixiao Sun, Kaspar Wuthrich, James Hamilton, Sukjin Han, Sami Stouli, Stefan Hubner, David Pacini, Gregory Jolivet, Itzik Fadlon, Xavier D'Haultfoeuille, Matt Masten, Adam Rosen, Ashesh Rambachan, Davide Viviano, Michael Pollmann, Kirill Ponomarev, and Rami Tabri for their helpful comments.  Participants at EGSC 2021, EWMES 2021, the Microeconometrics Class of 2022-2023 Conference at Duke and seminar participants at PSE-CREST, University of Exeter, the University of Warwick, UvA, Erasmus University Rotterdam, University of Bristol, University of Manchester, University of Surrey, NYUAD, UCSD, University of Victoria, and the Philadelphia Federal Reserve provided valuable discussion. All remaining errors are mine. A previous version of this paper circulated under the title ``Robustness, Heterogeneous Treatment Effects, and Covariate Shifts.''}

\maketitle
\pagenumbering{gobble}

\begin{abstract}
\noindent This paper studies the robustness of estimated policy effects to changes in the distribution of covariates, a key determinant of the external validity of (quasi)-experimental results. 
I propose a novel robustness metric $\delta^*$ which measures the smallest covariate shift needed to invalidate an empirical claim about the policy effect (e.g., $ATE > 0$). I estimate $\delta^*$ via de-biased GMM, achieving a parametric rate of convergence while accommodating machine-learning estimators of treatment-effect heterogeneity (e.g., LASSO, random forests, neural networks). I develop benchmarking and calibration exercises to interpret the magnitude of $\delta^*$. I illustrate these tools in an application to the Oregon Health Insurance Experiment. Researchers can report $\delta^*$ alongside the point estimate and standard error as a third number gauging external validity under covariate shifts.
\end{abstract}

\keywords{Robustness, 
Heterogeneous Treatment Effects, KL divergence, 
Semiparametric estimation, De-biased GMM, Oregon Health Insurance Experiment} 

\JEL{C14, C18, C44, C51, C54, D81, I13}

\end{singlespace}

\pagebreak

\clearpage
\pagenumbering{arabic} 
\setcounter{page}{1}

\section{Introduction}
Evidence-based policy-making uses experimental and quasi-experimental studies to guide the adoption of policies in various settings.
This approach relies on (quasi)-experimental findings being robust and generalizable beyond the original experiment. In practice, this is not always the case: there are several examples of policies that, when implemented in non-experimental settings, fell short of their own experimental estimates \citep{deaton2010instruments, cartwright2012evidence, williams2020external}. Researchers and policy-makers may want to complement their estimates with a tool quantifying the robustness of their findings for policy adoption beyond the experimental setting. 

In this paper, I introduce a new robustness metric, a scalar $\delta^*$, that quantifies how much the characteristics of policy recipients would need to change to invalidate the (quasi)-experimental findings. The metric captures uncertainty arising from systematic differences in recipients' characteristics across environments.\footnote{Quantifying other sources of systematic uncertainty has been a central theme in the recent econometric literature including \citet{andrews2017measuring} for moment conditions, \citet{altonji2005selection, oster2019unobservable, cinelli2020making} for confounding factors, and the breakdown approaches in \citet{horowitz1995identification, masten2020inference, rambachan2023more}.} This contrasts with uncertainty from sampling variation, summarized by the standard errors that routinely accompany point estimates. As such, $\delta^*$ complements standard errors and can be reported alongside them as a ``third number''. To make its magnitude operational for researchers, I suggest interpretation and calibration exercises, including benchmarking $\delta^*$ against covariate shifts in relevant implementation environments.
\par 

As a motivating example, consider a policy-maker who must decide whether to offer medical insurance coverage to low-income households. The policy-maker has access to the experimental estimates of \citet{finkelstein2012oregon} which suggest that a similar intervention led to higher health-care utilization and reduced financial strain for recipients in Oregon. The target population of insurance recipients could differ from the experimental one in Oregon along important dimensions. Our goal is to quantify how robust the experimental findings would be if relevant characteristics of the recipients are allowed to change. In this paper, I provide a solution to this problem by leveraging the policy effect heterogeneity in the experiment.  \par

When policy effects are heterogeneous across sub-populations with different covariate values, (quasi)-experimental findings are generally not robust to changes in the  covariates' distribution. Small changes in the distribution of the covariates could lead to significant aggregate changes in the policy effects. For example, in the Oregon experiment, subsidized health insurance could benefit sicker patients more than healthier patients. Then, the proportion of recipients with a given pre-existing health status, health habits, and/or co-morbidities may strongly influence the overall effect of the policy. 
Often, these covariates are exclusively collected in the experiment and are not all available in the new policy environment prior to implementation. As a result, the reweighting procedures in \citet{hsu2020counterfactual} and \citet{hartman_2020_generalizability} are often infeasible, since they require the full covariate set in the new environment. 
Moreover, the heterogeneity of policy effects  can be hard to model. While domain knowledge can help select covariates that are predictive of the heterogeneity of policy effects, it typically cannot pin down its specific functional form. Because heterogeneity is the channel linking covariate shifts to the aggregate policy effects, a general approach to robustness must reflect the uncertainty regarding the heterogeneity's functional form. \par

My robustness metric avoids the need to impose a functional form for the policy effect heterogeneity, letting it instead be flexibly estimated. When designing a robustness metric for distributional changes, relying on functional form assumptions carries important implications for what type of shifts the metric can detect.\footnote{Many popular existing approaches to robustness and sensitivity analysis, like \citet{altonji2005selection}, \citet{oster2019unobservable} and \citet{cinelli2020making}, take advantage of specific functional forms.} If the way we measure a shift is misaligned with the heterogeneity model, the resulting measure of robustness may be misleading. For example, suppose that distance between two covariate distributions is measured only by their difference in means. With an unrestricted form for the heterogeneity of policy effects, one can construct a mean-preserving shift that invalidates the policy-maker's claim. If in the Oregon experiment, higher-income recipients have negative effects while lower-income recipients have positive effects, a mean-preserving spread of the income distribution could flip the aggregate effect.
Yet, by construction, this shift would have a distance of zero from the experimental covariates, despite changing the experimental findings. This example motivates a robustness metric that accommodates flexible forms of policy effect heterogeneity, whose functional form is, \textit{ex-ante}, unknown. My metric does so while remaining easy to compute and interpret: a one-number summary of heterogeneity which only depends on (quasi)-experimental data. \par

There is a natural connection between the covariate robustness exercise and the literature on Partial Policy Effects. For example, \citet{rothe2012partial} considers the effect of an (infra)-marginal perturbation of the covariate distribution along a fixed direction on a functional of the unconditional outcome distribution. In contrast, in this paper, the direction of the perturbation is not specified \textit{ex-ante} and may itself be the object of interest as it represents, among all possible shifts invalidating the policy-maker's conclusion, the hardest one to detect. This distinction reflects the different purpose and the complementarity of the two approaches. A specific candidate for the covariate distribution is most useful for decomposition exercises that highlight the contribution of several variables on the unconditional distribution, like in the application in \citet{rothe2012partial}. Conversely, searching within a large space of covariate distributions is useful for the policy-maker evaluating the experimental evidence for policy adoption. \par 

Measuring covariate shifts requires choosing a distance between distributions. In my approach, I adopt Kullback-Leibler divergence (KL distance). It is a popular choice for sensitivity analysis, appearing recently in \citet{christensen2023counterfactual} for moment inequality models, \citet{duchi2021learning} for distributionally robust stochastic optimization, and, in a Bayesian context in \citet{ho2023global}. It has several advantages in our context. First, it is invariant to smooth invertible transformations of the covariates, hence independent of the covariates' units \citep{qiao2010study}. Second, it provides a closed form expression for the proposed global robustness measure, while other popular robustness approaches, like \citet{broderick2020automatic} rely on local approximations. Leveraging the closed form solution, I cast estimation of my robustness metric as a GMM problem depending only on the observed covariate distribution and a functional parameter capturing the heterogeneity of policy effects. \par

Policy-effect heterogeneity can often be sparse: from a rich covariate set, only a few are needed to capture observable effect variation.
With many covariates, it can be hard to select which ones are important \textit{ex-ante}. Machine-learning estimators, like LASSO, random forest, or neural networks, can automatically exploit the sparsity and select the key covariates, avoiding \textit{ad-hoc} procedures. Using machine-learning to estimate policy effect heterogeneity is appealing, but it may result in substantial bias in the estimated robustness metric $\delta^*$, due to regularization and/or model selection.
To accommodate machine-learning methods, I construct a de-biased GMM estimator leveraging the theory in \citet{chernozhukov2020locally} to eliminate the first-order bias from first-step estimators. I show that my metric $\delta^*$ can be consistently estimated at $\sqrt{n}$-rate under mild conditions, letting the researcher flexibly choose among many first-step estimators of policy effect heterogeneity.  \par

I apply my robustness procedure to study the Oregon health insurance experiment, whose findings have informed policy adoption in public health \citep{OregonNYT}. Focusing on health-care utilization and financial strain outcomes, I evaluate the robustness of the policy-effects estimates in \citet{finkelstein2012oregon} to covariate shifts. The recipients of the Oregon lottery are predominantly older, in poorer health, and with a larger proportion of White individuals than the national average \citep{finkelstein2013oregon}. These demographic features invite questions about the robustness of the Oregon experiment's outcomes, especially if they are used to shape policies in other states. Differences in the magnitude and sign of the effects of the Medicaid expansions in Oregon and Massachusetts have motivated \citet{kowalski2023reconciling} to investigate the different populations of beneficiaries in the two states. My robustness exercise is complementary: I compute the smallest covariate shift from the Oregon experiment that eliminates the lottery's positive effects on health-care utilization and financial strain. Among the outcomes considered, outpatient visits are the most robust. \par

This paper is also related to the econometric and statistics literature on robustness and sensitivity analysis developed since \citet{tukey1960survey} and \citet{huber1965robust}. Recently, there are many other important but distinct robustness approaches: geared towards external validity \citet{meager2019understanding}, \citet{gechter2015generalizing}, \citet{gechter2024generalizing}, robustness to dropping a percentage of the sample \citet{broderick2020automatic}, by looking at sub-populations \citet{jeong2020robust}, or with respect to unobservable distributions like in \citet{christensen2023counterfactual}, \citet{armstrong2021sensitivity}, \citet{bonhomme2018minimizing}, and \citet{antoine2020robust}, \citet{adjaho2022externally} in the context of optimal policy choice. 
My paper complements this tool-set by giving the policy-maker an explicit measure of robustness of a policy claim to shifts in the covariate distributions. There are two reasons to focus on observable characteristics. First, they are readily available to the policy-maker and are likely to be of first-order importance when assessing the robustness of (quasi)-experimental findings. Second, the resulting robustness metric is identified through the (quasi)-experimental data, limiting the need for bounding or partial identification approaches.  \par

The paper is organized as follows: Section \ref{section:Heterogeneity and Robustness} introduces the setup and defines the robustness metric. Section \ref{section:Estimation and Asymptotic Results} presents the estimator and its asymptotic properties. Section \ref{section:Empirical Application} applies the approach to the findings from the Oregon health insurance experiment and provides interpretations for the robustness measure. Section \ref{section:Conclusion} concludes. The main proofs are in the Appendix. Additional results and proofs of the lemmas are available in \citet{spini_robustness}. 
Throughout, all results apply to both experimental and quasi-experimental settings. I use ``experiment'' to refer to both interchangeably.%
\section{A robustness metric for covariate shifts}
\label{section:Heterogeneity and Robustness}
In this section, I link treatment-effect heterogeneity to robustness in a potential-outcomes framework, focusing on the Average Treatment Effects (ATE). A policy-maker is interested in whether a claim such as $ATE > \tilde{\tau}$ remains valid when the covariate distribution differs from the experiment. Using the Conditional Average Treatment Effect (CATE), which aggregates to the ATE, I characterize the covariate distribution that (i) violates the claim and (ii) is closest to the experimental distribution. I call it the \emph{least-favorable distribution} because it is the hardest to distinguish from the experimental distribution. Distance is measured by the KL divergence: the resulting minimum distance, $\delta^*$, is my robustness metric. Any covariate distribution within KL distance $\delta^*$ of the experimental distribution necessarily preserves the claim.

\subsection{Set-up and Preliminaries}

Let $Y_d$ denote the potential outcome under binary treatment $d$. In the experiment, for each unit, one observes treatment status $D\in\{0,1\}$, realized outcome $Y=DY_1+(1-D)Y_0 \in\mathcal Y$, and covariates $X\in\mathcal X$.\footnote{Additional control variables $W$ can be accommodated. I suppress them here to simplify notation.} I partition the covariates as $X=(X_c,X_e)$, where $X_c$ collects covariates with counterparts in census-type data in other states, and $X_e$ collects experiment-specific covariates with no natural counterpart outside the experiment. In my empirical application, $X_c$ includes indicators for race, gender, age, education, and urban area whereas $X_e$ includes \citet{finkelstein2012oregon}'s proxy for health status.
Though treatment effect heterogeneity will be estimated using the full vector $X$, its partition into $(X_c,X_e)$ will play a role in my benchmarking exercise to gauge the magnitude of my robustness metric. Let $P_X$ denote the probability measure for $X$ in the experiment and let $F_X$ be its associated distribution. I distinguish potential outcomes under $F_X$ from those under an alternative $F'_X$ by writing $Y_d$ and $Y'_d$, respectively. The propensity score is $\pi(x)=P(D=1\mid X=x)$. Finally, for any random variable $W$, let $\mathcal W$ denote its support. The interior of a set $S$ is $\interior{S}$.

\begin{assumption}\label{ass:Unconfoundedness}{Unconfoundedness and Overlap}
\begin{enumerate}[label=\roman*)]
\item $Y_1,Y_0 \indep D | X$. 
\item There exists an $\epsilon >0$ such that all $x \in \mathcal{X}$ we have $0 < \epsilon \leq \pi(x) \leq 1 - \epsilon < 1$
\end{enumerate} 
\end{assumption}
In an RCT, under complete or covariate-based randomization of treatment assignment, Assumption \ref{ass:Unconfoundedness} i) holds by design. In the case of quasi-experimental studies Assumption \ref{ass:Unconfoundedness} i) requires the researcher to carefully evaluate the selection mechanism that governs program participation. 
Assumption \ref{ass:Unconfoundedness} ii) is strict overlap: weaker forms still allow identification, this version is needed for estimation in Section \ref{section:Estimation and Asymptotic Results}. 

The policy-maker's parameter of interest is the $ATE := \mathbb{E}[Y_1 - Y_0]$. The CATE, defined by $\tau(x) := CATE(x) = \mathbb{E}[Y_1 - Y_0|X=x]$, captures how average effects change across sub-populations with covariate value $X =x$. Under Assumption \ref{ass:Unconfoundedness} i), $\tau_{F_X}(x)$ is nonparametrically identified by the difference between $\gamma_1(X) := \mathbb{E}[Y|D=1,X=x]$ and $\gamma_0(X) =  \mathbb{E}[Y|D=0,X=x]$ using data from the experiment \citep{imbens2015causal}.\footnote{If the CATE only partially identified, like in the case of non-compliance based on unobservables, it is possible to follow a bounding approach for my robustness procedure. This approach is sketched in \citet{spini_robustness} but I leave the details for future research.} ATE is obtained by averaging $\tau(x)$ with weights proportional to $F_X$. We can write the map sending $F_X$ to its corresponding ATE as: 
\begin{equation}
\label{ATE mapping}
ATE : F_{X} \mapsto \int_{\mathcal{X}} \tau_{F_X}(x) dF_{X}(x)
\end{equation}
The subscript $F_X$ on $\tau(x)$ indicates that, in general, it's possible that the functional form of CATE depends on $F_X$. In this case, a change in the distribution of the covariates from $F_X$ to $F'_X$ would affect the magnitude of ATE through two channels:  a direct effect through the weights of $F_X(x)$ and an indirect effect through changing the shape of the function $x \mapsto \tau(x)$. Of course, without further assumptions, $\tau_{F_X}(x)$ is only identified when $F_X$ is the experimental distribution. In this paper, I use the covariate shift assumption\footnote{This assumption appears, for example, also in \citet{hsu2020counterfactual} and \citet{jeong2020robust}.} to eliminate the indirect effect.

\begin{assumption}\label{ass:Covariate Shift}{\textbf{(Covariate Shift})}
Let $X'$ denote the covariates in the new environment. Then:
\begin{enumerate}[label=\roman*)]
\item  $F_{Y_d'|X'}(y|x) = F_{Y_d|X}(y|x)$ for $d \in \{ 0,1 \} $, for all $x \in \mathcal{X}$ and $y \in \mathcal{Y}_d$ and all distributions of $X'$. 
\item  $\mathcal{X}' \subseteq \mathcal{X}$
\end{enumerate}
\end{assumption}
Assumption \ref{ass:Covariate Shift} i) says that the causal link between the treatment variable $D$ and the potential outcomes of interest $Y_1$ and $Y_0$ does not depend on the distribution of the observables. One could think of Assumption \ref{ass:Covariate Shift} i) as analogous to a policy invariance condition with respect to the distribution of covariates. 
Assumption \ref{ass:Covariate Shift} ii) says the support of the covariates in the new environments is contained in the support of the baseline environment. In practice, this limits the extrapolation to environments for which any value of the covariates could have been observed in the experimental setting as well, albeit with a different weight. 
Because Assumption \ref{ass:Covariate Shift} guarantees that $\tau_{F_X}(x)$, the CATE, does not vary when $F_X$ is replaced by any other distribution $F_{X'}$ it is not necessary to index $\tau(x)$ with $F_X$.\footnote{This identification result follows immediately from Ass. \ref{ass:Covariate Shift}. See \citet{hsu2020counterfactual}, Lemma 2.1.} Then, the link between $F_X$ and ATE reduces to integration against a fixed $\tau(x)$ and the map in Eq.\eqref{ATE mapping} is linear in $F_X$:
\begin{equation}
\label{ATE mapping noindex}
ATE : F_{X} \mapsto \int_{\mathcal{X}} \tau(x) dF_{X}(x)
\end{equation}
Before presenting the general framework I give the simplest nontrivial example of a robustness exercise with respect to covariate shifts.

\begin{example}
\label{BinaryExample}
Consider a binary covariate $X = \{ 0,1 \}$. $D$ is randomized,
satisfying Assumption \ref{ass:Unconfoundedness}. Then $\tau(x):=\mathbb{E}[Y_1|X=x]-\mathbb{E}[Y_0|X=x]$ is identified. Because $X$ is Bernoulli, any distribution on $\{0,1\}$ is fully described by $P_X(X=1) = p_1$. 
\begin{align*}
ATE(F_X) = ATE(p_1)
&= \tau(0) \cdot (1-p_1) + \tau(1) \cdot p_1.
\end{align*}
Suppose the experimental $ATE > 0$ and $\tau(1) > 0 > \tau(0)$: treatment effects have different signs. What is the closest covariate distribution invalidating the claim $ATE > 0$? Setting ATE to 0 and solving for $p_1^*$: 
\begin{equation*}
\tau(0) \cdot (1-p_1^*) + \tau(1) \cdot p_1^* = 0 \implies
p_1^* = \frac{-\tau(0)}{\tau(1) - \tau(0)} \in [0,1].
\end{equation*}
$|p_1^*- p_1| = |\frac{-\tau(0)}{\tau(1) - \tau(0)} - p_1|$ is the smallest shift in $p_1$ that invalidates the claim $ATE > 0$.
\end{example}
Under what conditions does a solution like $p_1^*$ exist in general, and when is it unique? When $\mathcal{X}$ is not discrete, distributions on $\mathcal{X}$ are inherently infinite-dimensional unless one imposes parametric restrictions. Moreover, how should one measure the distance between $p_1^*$ and $p_1$ in this case? Motivated by these questions, I adopt a nonparametric notion of distance between probability distributions.\footnote{\citet{spini_robustness} discusses how the general procedure can be specialized to certain parametric classes of distributions. In such cases, the relevant covariate shifts coincide with mean shifts.}
 
\begin{definition}[KL-divergence]
\label{def:KL-divergences}
The $KL$-divergence between two distributions $F_X$ and $F_X'$ is given by:
\begin{equation}
D_{KL}(F_X'||F_X) :=  \int_{\mathcal{X}} \log \left(\frac{dF_X'}{dF_X}(x) \right) \frac{dF_X'}{dF_X}(x) dF_X(x),
\end{equation}
where $\frac{dF_X'}{dF_X}$ is the Radon-Nikodym derivative of distribution $F_X'$ w.r.t the experimental distribution $F_X$, provided that $P_X' \ll P_X$ for the respective probability measures. \end{definition}

\textbf{Why choose KL?} There are several ways to measure the distance between a candidate distribution $F'_X$ and $F_X$, each with advantages and disadvantages. For our purposes, a first requirement is to measure discrepancies without committing to a parametric family $\mathcal F$. As noted in the introduction, unless the shape of $\tau(x)$ is restricted jointly with $\mathcal F$, one can change the ATE while keeping a parametric discrepancy arbitrarily small. This requirement rules out, for example, Mahalanobis-type measures and motivates focusing on nonparametric distances. 

Two prominent classes of nonparametric distances are Integral Probability Metrics (IPMs) and $\phi$-divergences. IPMs include Total Variation, kernel-based distances like MMD and, notably, Wasserstein distances (e.g., $W_1$, $W_2$) which have been increasingly adopted in recent work (e.g., \citet{adjaho2022externally,athey2024using,gunsilius2023distributional}). For our problem of quantifying robustness to covariate shifts, Wasserstein is less attractive for two reasons. First, Wasserstein is generally not invariant to invertible transformations of the covariates $X$. More broadly, under standard regularity conditions, requiring an IPM to satisfy this invariance property is highly restrictive and, in effect, forces the distance to be a $\phi$-divergence \citep{qiao2010study}. As a consequence of this lack of invariance, transformations commonly used in empirical work, such as logs or inverse hyperbolic sine, (which are not isometries), can mechanically change the measured Wasserstein distance and the resulting robustness assessment. Second, Wasserstein-based optimization typically relies on entropic regularization to inject sufficient convexity for computational tractability.\footnote{For results on convergence rates for entropy-regularized Wasserstein distance see \citet{tabri2026sieve}.} In contrast, the KL formulation naturally enjoys this convexity and, in addition, delivers a sharp closed-form characterization as discussed in Section~\ref{subsection:A closed form solution for quantifying robustness}. 

The second prominent class is the family of $\phi$-divergences (including $\chi^2$, Hellinger, and KL). Unlike Wasserstein, all $\phi$-divergences are invariant to any invertible measurable transformation of $X$ (with measurable inverse).\footnote{One can show the following result. Let $T: \mathcal{X} \rightarrow T(\mathcal{X})$ be a (deterministic) measurable transformation with measurable inverse $T^{-1}(\cdot)$. Then $D_{KL}(G_{T(X)} \Vert F_{T(X)})=D_{KL}(G_{X} \Vert F_{X})$.} In addition, KL has three advantages that make it a particularly attractive choice for our problem. First, under KL the minimization problem in Eq.\eqref{eq:Inf_Problem}-\eqref{eq:Constraint} has an exponentially tilted, unique interior solution (Theorem~\ref{thm:least favorable features}), which in turn enables convenient (de-biased) GMM estimation and interpretation of the minimizer (discussed in Theorem \ref{thm:least favorable features}). Second, KL satisfies an exact chain-rule decomposition over conditional distributions, which I exploit in the benchmarking exercise in Section \ref{subsection: Benchmarking Robustness}. Third, KL’s natural connection to hypothesis testing and power lets us quantify ``distance" in terms of statistical distinguishability: a KL magnitude can be mapped to the minimal sample size needed to detect a covariate shift. I illustrate this connection in Section~\ref{subsec:Benchmarking empirical}. 

\subsection{The policy-maker's problem: quantifying robustness}
\label{subsection:the policy-maker's problem: quantifying robustness}
Once the $ATE$–covariate link and a distance measure are specified, I can formalize the policy-maker's robustness problem. Consider the claim given by $ATE > \tilde{\tau}$ (the reverse inequality is analogous) where $\tilde{\tau}$ is a policy-relevant threshold capturing, for example, implementation costs or the value of a competing policy. In Example \ref{BinaryExample} $\tilde{\tau}=0$, often a natural benchmark. The policy-maker is interested in the smallest shift from the experimental distribution, $F_X$, invalidating the claim $ATE > \tilde{\tau}$. Formally:  
\begin{align}
\inf_{F_X': \ P_X' \ll P_X; P_X'(\mathcal{X}) = 1} & D_{KL}(F_X'||F_X) \label{eq:Inf_Problem} \\
s.t. \ & \int_{\mathcal{X}} \tau(x) d F_X'(x) \leq \tilde{\tau}. \label{eq:Constraint}
\end{align}
Eq.\eqref{eq:Inf_Problem}–\eqref{eq:Constraint} define an optimization problem over the set of covariate distributions violating the claim $ATE > \tilde{\tau}$ and the objective selects the one closest to $F_X$ in KL divergence. By Assumption~\ref{ass:Covariate Shift}, $\tau(x)$ in Eq.\eqref{eq:Constraint} is not indexed by $F_X'$: the constraint set is linear in $F_X'$. Since $D_{KL}(\cdot \Vert F_X)$ is a strictly convex function on a convex feasible set, the problem admits a unique ($P_X$–a.e.) solution, characterized in Theorem~\ref{thm:closed form solution}.

\begin{remark}
The program in Eq.\eqref{eq:Inf_Problem}--\eqref{eq:Constraint} is nonparametric: no structure beyond absolute continuity $P_X'\ll P_X$ is imposed on $F_X'$.\footnote{For example, if both $P_X$ and $P_X'$ are absolutely continuous with respect to a common dominating measure $\lambda$ and $\mathrm{supp}(P_X')\subseteq \mathrm{supp}(P_X)$, then $P_X'\ll P_X$.} 
Absolute continuity implies that $F_X'$ cannot assign mass outside the experimental support $\mathcal X$, which I view as natural: the feasible covariate distributions should not place weight on subpopulations for which the experiment provides no data. Similar support restrictions are standard in the distributional policy effects literature, e.g.,\ \citet{rothe2012partial}. Without them, $\tau(x)$ is unidentified at covariate values $x$ that are never observed and can be chosen to generate arbitrarily large average effects, rendering the robustness exercise uninformative.
\end{remark}

\noindent We are now ready to define the \textit{least-favorable distribution} and the robustness metric.

\begin{definition}
\label{def:LeastFavorable&Robustness}
Fix $\tilde{\tau}$. The robustness metric $\delta^*(\tilde{\tau})$ is the infimum of Eq.\eqref{eq:Inf_Problem}. The (set of) \textit{least-favorable distribution(s)} $\{F_X^*(\tilde{\tau})\}$ is the (set of) minimizer(s) of Eq.\eqref{eq:Inf_Problem}. 
\end{definition}

I define $\delta^*(\tilde{\tau})$ as the KL distance between the experimental covariate distribution $F_X$ and a \emph{least-favorable} distribution $F_X^*$, i.e., the KL-closest covariate shift that violates the claim $ATE>\tilde{\tau}$. As I show in Section \ref{subsec:Benchmarking empirical}, among all shifts that invalidate the claim, this least-favorable shift is the hardest to detect statistically. The set $\{F_X^*(\tilde{\tau})\}$ may be empty for some values of $\tilde{\tau}$. When the experimental ATE already violates the target (i.e., $ATE(F_X)\le \tilde{\tau}$), the experimental distribution is feasible and achieves the minimum: one can take $F_X^*=F_X$, so $\delta^*(\tilde{\tau})=0$. The problem is nontrivial when $ATE(F_X)>\tilde{\tau}$, in which case $F_X$ is infeasible for \eqref{eq:Constraint} and, under regularity conditions, any least-favorable solution must satisfy $\delta^*(\tilde{\tau})>0$. In Example~\ref{BinaryExample}, I imposed $ATE(p_1)>0$ precisely to ensure this nontrivial case.

With discrete $\mathcal X$, $F_X$ is a point in the probability simplex. The optimization
in Eq.\eqref{eq:Inf_Problem}-\eqref{eq:Constraint} has a simple geometry: the constraint $\int_{\mathcal X}\tau(x)\,dF_X'(x)\le \tilde{\tau}$
defines the set of covariate shifts invalidating the claim, and the KL objective selects the closest one.

\begin{example}\label{ex:threepoints}
Let $\mathcal X=\{H,M,L\}$ index income bins and write $F_X'=(p_H,p_M,1-p_H-p_M)$. Suppose $(\tau(H),\tau(M),\tau(L))=(1,2,3)$ and set the threshold $\tilde{\tau}=1.8$.
Under the experimental distribution, $ATE(F_X)=2.4>\tilde{\tau}$.
\end{example}

Figure~\ref{fig:Three discrete variables} depicts KL level sets around $F_X$, the feasible region, and the \textit{least-favorable distribution} from Example~\ref{ex:threepoints}. Since the objective and constraint in Eq.\eqref{eq:Inf_Problem}-\eqref{eq:Constraint} are smooth in $(p_H,p_M)$, the solution is characterized by the KKT conditions. The green curve is the KL level set at $\delta^*(\tilde{\tau})$: any covariate distribution closer than $\delta^*(\tilde{\tau})$ is guaranteed to satisfy the policy-maker's claim. When $\mathcal X$ is not discrete, a visualization like Figure~\ref{fig:Three discrete variables} may be difficult. Nonetheless, under mild regularity
conditions, a \textit{least-favorable distribution} exists, is unique and admits a closed-form characterization via exponential tilting, with a well-defined
$\delta^*(\tilde{\tau})$ for a range of values of $\tilde{\tau}$. 

\begin{figure}[!t]
    \centerline{
    \adjincludegraphics[width= 0.75\textwidth,Clip={2cm} {0.5cm} {2cm} {0.8cm}]{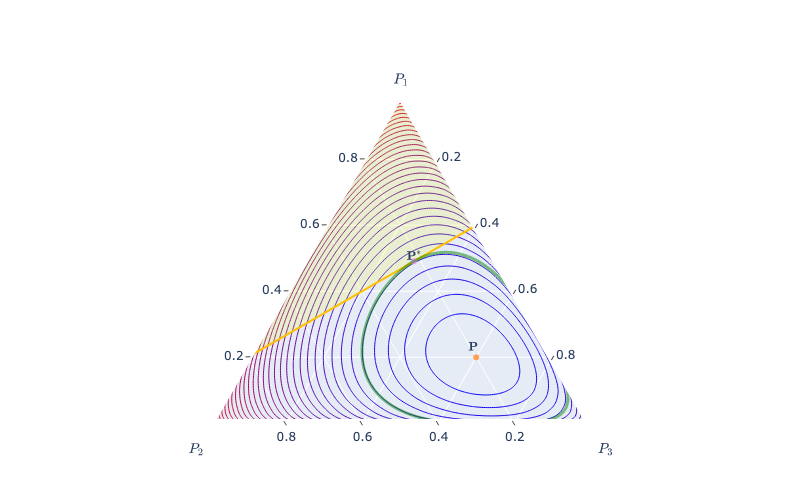}}
    \caption[\footnotesize Discrete-support example: geometric intuition for the least-favorable distribution]{\footnotesize
The triangle is the 2-simplex of distributions $P'=(p_H,p_M,p_L)$ on $\mathcal X=\{H,M,L\}$ (barycentric coordinates). The experimental distribution is $P=(0.2,0.2,0.6)$. The shaded region consists of distributions with $ATE(P')\le 1.8$. Contours are KL level sets $D_{KL}(P'\Vert P)$ (darker = smaller). The point $P^*=(0.491,0.218,0.291)$ is the \textit{least-favorable distribution}. The green contour is the level set at $\delta^*\approx 0.296$; any $P'$ with $D_{KL}(P'\Vert P)<\delta^*$ is guaranteed to satisfy the claim.}
\label{fig:Three discrete variables}
\end{figure}

\subsection{A closed form solution for quantifying robustness}
\label{subsection:A closed form solution for quantifying robustness}

I characterize the solution to Eq.\eqref{eq:Inf_Problem}-\ref{eq:Constraint} under regularity conditions.

\begin{assumption}[Boundedness]\label{ass:boundedness}
The conditional average treatment effect $\tau(X)$ is bounded $P_X$-a.s. (i.e., there exists $M<\infty$ such that $\mathbb{P}_{X} (|\tau(X)| \leq M) =1$).
\end{assumption}

Assumption~\ref{ass:boundedness} also holds $P'_X$-a.s.\ for any $P'_X\ll P_X$, since absolute continuity prevents $P'_X$ from assigning mass to $P_X$-null sets. Boundedness as in Assumption \ref{ass:boundedness} is plausible in a cross sectional, micro-econometrics setting. One could weaken it to an exponential-moment condition, e.g.,\ $\mathbb E[\exp(\kappa|\tau(X)|)]<\infty$ for all $\kappa>0$ (see \citealp{komunjer2016existence}). I maintain Assumption~\ref{ass:boundedness} for ease of exposition. 

The constraint set $\{F'_X:\int \tau\,dF'_X\le \tilde{\tau}\}$ is a convex set. If empty, it implies a value of $+\infty$ for the program in Eq.\eqref{eq:Inf_Problem}-\eqref{eq:Constraint}. To rule this out, I assume $\tilde{\tau}$ is attainable via some covariate shift, i.e.,\ there exists an $F'_X$ with $ATE(F'_X)=\tilde{\tau}$. If $\tau(x)$ is not sufficiently heterogeneous, like in Example \ref{ex:homogeneous treatment effects}, ATE can never reach $\tilde{\tau}$.

\begin{example}[Homogeneous treatment effects]\label{ex:homogeneous treatment effects}
If $\tau(x)= c$, for some $c \in \mathbb{R}$, then $ATE(F'_X)=c$ for every $F'_X$: no covariate shift can reach a different threshold $\tilde{\tau}\neq c$.
\end{example}

Constant treatment effects are an extreme case: under Assumption~\ref{ass:Covariate Shift}, if $\tau(x)= c$ then $ATE(F'_X)=c$ for every $F'_X$, so the claim $ATE > \tilde{\tau}$ can be extrapolated from the experiment to any environment. More generally, if $\tilde{\tau}$ lies outside the range of $\tau(X)$, then no covariate distribution can satisfy $ATE=\tilde{\tau}$, the feasible set in Eq.\eqref{eq:Constraint} is empty (hence the value of Eq.\eqref{eq:Inf_Problem} is $+\infty$). For instance, if $2\le \tau(X)\le 5$ a.s., then $\tilde{\tau}=1$ is unattainable. Non-trivial robustness requires some minimal heterogeneity of treatment effects.\footnote{It is convenient to restrict the robustness metric to a real valued parameter rather than $\mathbb{R} \cup \{+\infty\}$.} The following assumption guarantees a nonempty feasible set:

\begin{assumption}(Non-emptiness)
\label{ass:Non-emptiness}
Recall $\mathcal F := \{ F_X' \ s.t. \ P_X' \ll P_X \}$.
Define \(L(F_X) := \int_{\mathcal X} \tau(x)\, dF_X(x)\).
I require $\tilde{\tau} \in \interior L(\mathcal F)$.
\end{assumption}
Assumption \ref{ass:Non-emptiness} is a joint condition on $\tau(\cdot)$ and $\tilde{\tau}$. It requires $\tau(\cdot)$ to be sufficiently heterogeneous so that some $F'_X\in\mathcal F$ attains $ATE(F'_X)=\tilde{\tau}$. In Example \ref{BinaryExample}, Assumption \ref{ass:Non-emptiness} is satisfied: $\tau(0) < 0 < \tau(1)$ guaranteeing $\tilde{\tau} = 0 \in \interior L(\mathcal F) = (\tau(0),\tau(1))$. In Example \ref{ex:homogeneous treatment effects}, it fails: $L(\mathcal F) = \{c\}$ so $\interior L(\mathcal F) = \varnothing$. More broadly, the set $L(\mathcal F)$ captures how rich is the set of ATEs that could be attained by a covariate shift $F'_X$. \\
Assumption \ref{ass:Non-emptiness} also has testable implications. Given an estimate of $\tau(x)$, one can test whether $\tilde{\tau}\in(\inf_x \tau(x),\sup_x \tau(x))$ using the intersection-bounds procedure in \citet{chernozhukov2013intersection}. Homogeneous (or nearly homogeneous) treatment effects are cases in which such tests may reject. Technically, Assumption \ref{ass:Non-emptiness} serves as a (Slater-like) constraint qualification guaranteeing strong duality and the existence of the \textit{least-favorable distribution} (see \citet{komunjer2016existence} for an excellent overview on optimization problems involving KL and other $\varphi$-divergences).\footnote{The interior condition cannot be weakened. Under Ass. \ref{ass:boundedness}, $L(\mathcal{F})$ is an interval. If $\tilde{\tau}$ lies on the boundary, the feasible set in Eq.\eqref{eq:Constraint} collapses to a degenerate measure w.r.t $F_X$, implying infinite KL in Eq.\eqref{eq:Inf_Problem}. See the quasi-relative interior condition in \citet{borwein1993partially} Eq.(BL).} After introducing Assumptions \ref{ass:Unconfoundedness}-\ref{ass:Non-emptiness} we can state the key result below: 
\begin{theorem}[Closed form solution]
\label{thm:closed form solution}
Let Assumptions \ref{ass:Unconfoundedness}-\ref{ass:Non-emptiness} hold. Then:
 i) Eq.\eqref{eq:Inf_Problem} attains a minimum at $F_X^*$, uniquely characterized $P_X$-almost everywhere by:
\begin{equation}
\label{eq:Worst Case Distribution} 
\frac{dF_X^*}{d F_X}(x) = \frac{\exp(-\lambda (\tau(x) - \tilde{\tau}))}{\int_{\mathcal{X}} \exp(-\lambda (\tau(x) - \tilde{\tau})) dF_X(x)},     
\end{equation}
where $\lambda$ is the unique Lagrange multiplier implicitly defined by the equation:
\begin{equation}
\label{eq:Lagrange Multiplier}
\int_{\mathcal{X}} \exp(-\lambda (\tau(x) - \tilde{\tau}))(\tau(x) - \tilde{\tau}) dF_X(x) = 0.
\end{equation}
ii) The robustness metric $\delta^*(\tilde{\tau})$ is finite and given by:
\begin{equation}
\label{eq:ClosedFormDelta}
\delta^*(\tilde{\tau}) = D_{KL}(F_X^*||F_X) = - \log\left(\int_\mathcal{X} \exp(-\lambda(\tau(x)- \tilde{\tau})) dF_X(x) \right).
\end{equation}
\end{theorem}
Theorem \ref{thm:closed form solution} shows that the general robustness problem admits an exponential-tilting solution, making optimization over the nonparametric class $\mathcal{F}$ no harder than the parametric Examples \ref{BinaryExample} and \ref{ex:threepoints}. Their solution from Theorem \ref{thm:closed form solution} coincides with the one obtained directly by the KKT conditions. The theorem also implies $\delta^*(\tilde{\tau})$ is identified as a functional of $(F_X,\tau(\cdot))$, both nonparametrically identified from the experimental data, and it motivates the estimation approach in Section~\ref{section:Estimation and Asymptotic Results}. 

$\delta^*(\tilde{\tau})$ is distinct from other popular statistics used to summarize treatment-effect heterogeneity. Relative to rich descriptive summaries like sorted effects \citep{sorted_chernozhukov2018} and conditional quantile treatment effects \citep{koenker2005quantile}, $\delta^*(\tilde{\tau})$ is more parsimonious, collapsing heterogeneity into a single, easy-to-report number. Relative to other scalar summaries, $\delta^*(\tilde{\tau})$ is directly geared to the external-validity exercise: $\textrm{Var}(\tau(X))$ is threshold-free and therefore not tied to a particular claim, though it can be locally informative about robustness when $ATE_{F_X}$ lies near $\tilde{\tau}$. Likewise, the tail mass $P(\tau(X)\le \tilde{\tau})$ is threshold-specific but coarse, as it ignores changes in the shape of $\tau(X)$ that do not move mass across $\tilde{\tau}$. Overall, these trade-offs reflect complementary goals for these summaries. In this sense, $\delta^*(\tilde{\tau})$ expands the researcher's toolkit with a simple way to gauge external validity under covariate shifts.

\subsection{Benchmarking and interpreting robustness}
\label{subsection: Benchmarking Robustness}
After a researcher reports $\delta^*(\tilde{\tau})$, how should they interpret its magnitude? I propose to benchmark $\delta^*(\tilde{\tau})$ against the covariate shifts one could observe in other implementation environments, when such data is available. I leverage a decomposition of KL for the partition  $X=(X_c,X_e)$ leading to the useful result \citep{cover1999elements}:
\begin{proposition}
\label{prop:KL_decomposition}
Let $X=(X_c,X_e)$. Then for any two distributions $F$ and $G$: 
\begin{equation}
\label{eq:KL_decomposition}
D_{KL}(G_X \Vert F_X) = D_{KL}(G_{X_c} \lVert F_{X_c}) + \mathbb{E}_{G_{X_c}} \left( D_{KL}(G_{X_e|X_c} \lVert F_{X_e|X_c}) \right).   
\end{equation}
If $X_c \indep X_e$ for both $F$ and $G$, then $D_{KL}(G_X \Vert F_X) = D_{KL}(G_{X_c} \lVert F_{X_c}) + D_{KL}(G_{X_e} \lVert F_{X_e})$.
\end{proposition}
Proposition \ref{prop:KL_decomposition} decomposes  the left-hand side into two additive terms: (1) the KL between marginals $F_{X_c}$ and $G_{X_c}$ and (2) the average KL between the conditional distributions $F_{X_e|X_c}$ and $G_{X_e|X_c}$. Because the second term in Eq.\eqref{eq:KL_decomposition} is the expectation of a non-negative quantity, we have the lower bound
$D_{KL}(G_{X_c}\Vert F_{X_c}) \le D_{KL}(G_X\Vert F_X)$ (with equality if and only if
$G_{X_e\mid X_c}=F_{X_e\mid X_c}$ $G_{X_c}$-a.s.). Eq.\eqref{eq:KL_decomposition} yields a simple benchmarking strategy for $\delta^*(\tilde{\tau})=D_{KL}(F_X^*\Vert F_X)$. For any candidate implementation environment where $X_c$ is observed, we can compute the first term. Because the second term is not identified when $X_e$ is unavailable outside the experiment, a practical approach is to calibrate its magnitude as a proportion of the first term so that $D_{KL}(G_X\Vert F_X)=(1+\kappa)\,D_{KL}(G_{X_c}\Vert F_{X_c})$. Since both are in KL units, the comparison supports cardinal statements such as \textit{``$\delta^*(\tilde{\tau})$ is 20\% larger than the KL distance based on the census covariates $X_c$ between Texas and the Oregon experiment"}. If this $\kappa$ calibration is valid, any environment with $(1+\kappa)\,D_{KL}(G_{X_c}\Vert F_{X_c})<\delta^*(\tilde{\tau})$ is guaranteed to preserve the experimental claim. The special case $\kappa=0$ corresponds to the assumption $G_{X_e\mid X_c}=F_{X_e\mid X_c}$: the conditional distribution of the missing covariates matches the experiment.
Section~\ref{subsec:Benchmarking empirical} illustrates this benchmarking exercise for the empirical application in \citet{finkelstein2012oregon}, where other states are the new implementation environments, $X_c$ includes race, gender, age, education, and urban indicators (available in the state census), while $X_e$ is a proxy for health status (available only in the experiment). In the empirical application, I consider $\kappa\in\{0,0.2,1\}$, corresponding to no unobserved contribution, a contribution proportional to the relative dimension of $X_e$ and $X_c$, and equal observed and unobserved contributions.
\section{Estimation and Asymptotic Results} \label{section:Estimation and Asymptotic Results}
This section develops semiparametric estimators and establishes the asymptotic properties of: (i) the robustness metric 
$\delta^*$ in Equation~\eqref{eq:ClosedFormDelta} and (ii) of user-specified moments of the least favorable distribution in Equation~\eqref{eq:Worst Case Distribution}. 
The analysis builds on the de-biased GMM framework of \citet{chernozhukov2020locally}. Let $W = (Y,D,X)$ be the data. The characterization of Eq.\eqref{eq:Lagrange Multiplier}-\eqref{eq:ClosedFormDelta}- in Theorem \ref{thm:closed form solution} as integrals w.r.t $F_X$ implies a moment condition. As in \citet{newey1994chapter}, define a parameter $\theta_0 := (\nu_0,\lambda_0)^T$ satisfying the population moment: 
\begin{equation}
\label{eq:Moment conditions}
\mathbb{E}[g(W,\theta,\tau)] = \mathbb{E} \begin{bmatrix} 
\exp(-\lambda_0(\tau_0(X) - \tilde{\tau})) - \nu_0  \\
\exp( -\lambda_0( \tau_0(X) - \tilde{\tau})) (\tau_0(X) - \tilde{\tau}) \end{bmatrix} = \begin{bmatrix} 0 \\
0 \end{bmatrix},
\end{equation}
where $\tau_0(\cdot)$ is the true CATE. Assumptions \ref{ass:Unconfoundedness}--\ref{ass:Non-emptiness} guarantee that $(\nu_0,\lambda_0)$ are globally identified by Equation \eqref{eq:Moment conditions}.
From these, the robustness metric in Equation \eqref{eq:ClosedFormDelta} is identified by $\delta^* = -\log(\nu_0)$. The parameter space $\Theta \subseteq \mathbb{R}^2$ satisfies $0<\nu_0\le 1$. Positivity follows by $\exp(-\lambda_0(\tau(x)-\tilde{\tau}))>0$ for all $x$. If $ATE > \tilde{\tau}$ holds under $F_X$, then \(\delta^*>0\) and hence $\nu_0<1$. Because the true $\tau_0(X)$ is unknown but identified, a feasible version of Equation \eqref{eq:Moment conditions} replaces  $\tau_0(X)$ with a first step estimate $\hat{\tau}(X)$. The plug-in GMM $\hat{\theta} = (\hat{\lambda}, \hat{\nu})^T$ solves the sample equivalent of Equation \ref{eq:Moment conditions}.

Assumption \ref{ass:Unconfoundedness} guarantees nonparametric identification of $\tau_0(X)$ via $\gamma_1(X)-\gamma_0(X)$, leaving the researcher free to choose among many estimation strategies, both parametric and nonparametric. Popular options include random forest \citep{athey2016approximate} or doubly robust scores \citep{hsu2020counterfactual}. Because moment conditions in Equation \eqref{eq:Moment conditions} are not Neyman orthogonal w.r.t $\hat{\tau}(X)$, the first-step estimation of $\hat{\tau}(X)$ may affect the asymptotic distribution of $\hat{\delta}^* = -\log(\hat{\nu})$ through the plug-in estimator. This can lead to invalid inference unless one imposes some \textit{ad-hoc} rate conditions on the first step, which are restrictive and difficult to verify in practice \citep{chernozhukov2018double}. To avoid that, following \citet{chernozhukov2020locally}, I derive a debiased-GMM estimator that yields orthogonal moments and valid inference under a range of flexible estimators for $\tau_0(X)$. Details of this derivation are in \citet{spini_robustness}. 
It is convenient to index any functional nuisance parameter like $\gamma_1(X)$ and $\gamma_0(X)$ with the distribution of the data, with $F_0$ denoting the true distribution.

\begin{proposition}
\label{prop:De-biased GMM nonparamteric influence function}
The nonparametric influence function based on $g(\cdot)$ in Eq.\eqref{eq:Moment conditions} is:
\begin{align}
\label{eq:Non-paramteric Influence function characterization}
\phi(w,\theta,\gamma_0,\alpha_0) &= \begin{bmatrix} \exp \left( -\lambda \cdot (\gamma_{1,F_0}(x) - \gamma_{0,F_0}(x) - \tilde{\tau} ) \right) \cdot (-\lambda) \\
 \exp \left( -\lambda \cdot (\gamma_{1,F_0}(x) - \gamma_{0,F_0}(x) - \tilde{\tau} ) \right) \cdot (1-\lambda \cdot (\gamma_{1,F_0}(x) - \gamma_{0,F_0}(x) - \tilde{\tau} )) \end{bmatrix} \nonumber \\
&\quad\times
\Bigl(
  \alpha_{1,F_0}(x)\, d\,(y - \gamma_{1,F_0}(x))
  + \alpha_{0,F_0}(x)\,(1-d)\,(y - \gamma_{0,F_0}(x))
\Bigr) \nonumber \\
&\alpha_{F_0} :=(\alpha_{1,F_0}(x),  \alpha_{0,F_0}(x)) = \left(\frac{1}{\pi_{F_0}(x)}, \frac{1}{1-\pi_{F_0}(x)}\right),
\end{align}
\end{proposition}

\begin{proposition}
\label{prop:Neyman orthogonality}
The de-biased moment functions below are Neyman orthogonal. 
\begin{equation}
\label{eq:De-biased GMM}
\psi(w,\gamma,\theta, \alpha) = g(w,\theta,\gamma) + \phi(w,\theta,\gamma,\alpha).  
\end{equation}
\end{proposition}
Note that $\mathbb{E}_{F_0}[\psi(W,\theta,\gamma_0,\alpha_0)] = 0$. Under regularity conditions $\mathbb{V}[\psi(W,\theta,\gamma_0,\alpha_0)] < \infty$,  $\psi(\cdot)$ is a valid influence function. The $K$-fold cross-fitted de-biased GMM equations from Eq.\eqref{eq:De-biased GMM}, and the corresponding estimator for $\theta$ are:
\begin{align}
\hat{\psi}(\theta,\hat{\gamma},\hat{\alpha}) &= \frac{1}{K} \sum_{k = 1}^K \frac{1}{|I_k|} \sum_{i \in I_k} \left( g(W_i, \theta, \hat{\gamma}_{-k}) +  \phi(W_i, \tilde{\theta}, \hat{\gamma}_{-k}, \hat{\alpha}_{-k}) \right) \\
\label{eq:debiasedGMMtheta}
\hat{\theta} &= \arg \min_{\theta \in \Theta} \hat{\psi}(\theta, \hat{\gamma},\hat{\alpha}),
\end{align}
where $\tilde{\theta}$ is a consistent estimator for $\theta$ needed to evaluate $\phi(\cdot)$.\footnote{A natural candidate is the plug-in GMM, which is $o_p(1)$ but may not be $o_p(n^{-\frac{1}{2}})$ in general.} For each fold $k=1,\cdots, K$, $\gamma(\cdot)$ and $\alpha(\cdot)$ are estimated on the remaining $(K-1)$ folds, then the empirical moment is evaluated on fold $k$. Sample-splitting reduces own-observation bias and, together with the Neyman orthogonality, avoids complicated Donsker-type conditions that could fail for some estimators of $\hat{\gamma}$ and $\hat{\alpha}$, as discussed in \citet{chernozhukov2020locally}. 
To establish $\sqrt{n}$-convergence of $\hat{\theta}$, some mild conditions on the $L_2$ rates of convergence of the first-step estimators $\hat{\gamma}$ and $\hat{\alpha}$ are required.
\begin{assumption}
\label{Ass:rates of convergence}
For any $k$, $\lVert \hat{\gamma}_{-k} - \gamma_0 \rVert_L^2 = o_P(n^{-\frac{1}{4}}); \lVert \hat{\alpha}_{-k} - \alpha_0 \rVert_L^2 = o_P(1)$. 
\end{assumption} 
Assumption \ref{Ass:rates of convergence} can be satisfied by many flexible nonparametric estimators of $\hat{\gamma}$ including machine learning-based estimators like lasso, random forest, boosting, and neural nets. In practice, these can be useful when the covariate space is large but the true $\tau_0(X)$ has a sparse representation. Calibrated Monte Carlo simulations in \citet{spini_robustness} illustrate this point.
I derive the influence function representation for $\hat{\theta}$ under Assumptions \ref{ass:Unconfoundedness}-\ref{Ass:rates of convergence} which implies the following asymptotic normality result. 
\begin{theorem}[A.N. of $\hat{\theta}$]
\label{thm:Asymptotic normality}
Let Assumptions \ref{ass:Unconfoundedness}--\ref{Ass:rates of convergence} hold. For $\hat{\theta}$ in Equation \eqref{eq:debiasedGMMtheta}:
\begin{align*}
&\sqrt{n} (\hat{\theta} - \theta_0) \overset{d}{\rightarrow} \mathcal{N}(0,S) \ \textrm{with} \ S := (G^{-1}) \Omega (G^{-1})^T, \  \ \\
G &:= \mathbb{E}[D_{\theta} \psi(w,\theta_0,\gamma_0,\alpha_0)], \ \ \Omega := \mathbb{E}[\psi(w,\theta_0,\gamma_0,\alpha_0) \psi(w,\theta_0,\gamma_0,\alpha_0)^T], 
\end{align*}
and $D_{\theta} \psi(\cdot)$ is the Jacobian of the augmented moment condition with respect to $\theta$.
\end{theorem}

\begin{corollary}[A. N. of $\hat{\delta}^*$]
Let $\hat{\delta}^* = -\log(\hat{\nu})$. Then:
\begin{equation*}
\sqrt{n} (\hat{\delta}^* - \delta^*) \overset{d}{\rightarrow} \mathcal{N}\left(0,\frac{S_{11}}{\nu_0^2}\right),
\end{equation*}
where $S_{11}$ is the (1,1) entry of the variance covariance matrix S in Theorem \ref{thm:Asymptotic normality}. 
\end{corollary}
Theorem~\ref{thm:Asymptotic normality} provides point estimates and confidence intervals for $\delta^*$. 
Because $\delta^*$ is defined as the minimal distance required to invalidate the policy-maker’s claim, one would focus on the lower bound: overestimating $\delta^*$ overstates robustness, whereas underestimating is valid but conservative. 
This approach is analogous to \citet{masten2020inference} who report a one-sided confidence region for their breakdown frontier. 

\subsection{Reporting features of the \textit{least-favorable distribution}}
In addition to \(\delta^*\), the researcher may be interested in the 
\textit{least-favorable distribution} \(F_X^*\) itself.
When $\mathcal{X}$ is large, 
representing $F_X^*$ directly can be impractical. 
Instead, the researcher can compare $F_X$ and $F^*_X$ by reporting some moments. Reporting covariate means across treatment arms is standard practice in assessing internal validity (e.g. \citet{rosenbaum1984reducing}). By analogy, comparing moments from $F_X$ and $F_X^*$ could speak to external validity. Here I provide an estimator for a user-specified, finite collection of moments of $F_X^*$.

Theorem \ref{thm:least favorable features} shows the asymptotic properties of the joint estimator for $\theta$ and the additional moments, denoted by $\zeta$. 
\begin{theorem}[De-biased estimator of \textit{least-favorable} moments]
\label{thm:least favorable features}
Let $u: \mathbb{R}^{d} \rightarrow \mathbb{R}^s$, with $u \in (L^{\infty}(\mathcal{X}, P_X))^s$. Let $\zeta_0 = \mathbb{E}_{F^*_X}[u(X)] \in \mathbb{R}^s$. Define the estimating equation for $(\hat{\theta},\hat{\zeta})$, augmenting $\theta$ by the desired moments $\zeta$ of the least-favorable distribution.
\begin{align*}
\hat{\psi}^{u}(\theta,\zeta,\hat{\gamma},\hat{\alpha}) &:= \frac{1}{K} \sum_{k=1}^{K} \frac{1}{|I_k|} \sum_{i \in I_k} \begin{bmatrix} g(W_i,\theta,\hat{\gamma}_{-k}) + \phi(W_i, \theta, \hat{\gamma}_{-k}, \hat{\alpha}_{-k}) \\
g^u(W_i,\theta,\zeta,\gamma_{-k}) + \phi^u(W_i,\theta, \zeta,\hat{\gamma}_{-k},\hat{\alpha}_{-k}) \end{bmatrix} \\
(\hat{\theta},\hat{\zeta}) &:= \arg \min_{(\theta,\zeta) \in \mathbb{R}^{s+2}} \hat{\psi}^{u}(\theta,\zeta,\hat{\gamma},\hat{\alpha})^T \hat{\psi}^{u}(\theta,\zeta,\hat{\gamma},\hat{\alpha}) + o_P(1)
\end{align*}
where $g(\cdot), \phi(\cdot), \gamma(\cdot)$ and $\alpha(\cdot)$ are the same as in Proposition \ref{prop:De-biased GMM nonparamteric influence function}, and $g^u(\cdot)$ and $\phi^u(\cdot)$, taking values in $\mathbb{R}^s$ and defined below.
\begin{align*}
g^u(W_i,\theta,\zeta,\gamma) &= u(X_{i}) \exp(-\lambda(\tau(X_i) -\tilde{\tau}))  - \nu \cdot \zeta \\
\phi^u(W_i,\theta,\zeta,\gamma,\alpha) &= u(X_i) \exp \left( -\lambda (\tau(X_i) - \tilde{\tau} ) \right) (-\lambda) \left( \frac{D_i(Y_i - \gamma_{1}(X_i))}{\pi(X_i)} - \frac{(1-D_i)(Y_i - \gamma_{0}(X_i))}{1-\pi(X_i)} \right)    
\end{align*}
Let Assumptions \ref{ass:Unconfoundedness}--\ref{Ass:rates of convergence} hold. Then: 
\begin{align*}
\frac{1}{\sqrt{n}} \sum_{k=1}^K \sum_{i \in I_k} \psi^u(W_i, \theta, \zeta, \hat{\gamma}_{-k}, \hat{\alpha}_{-k}) &= \frac{1}{\sqrt{n}} \sum_{i=1}^n \psi^u(W_i, \theta, \zeta, \gamma_0, \alpha_0) + o_P(1) \\
\sqrt{n} \left(\begin{bmatrix} \hat{\theta} - \theta_0 \\
\hat{\zeta} - \zeta_0 \end{bmatrix} \right) \overset{d}{\rightarrow} \mathcal{N} (0, S^u) & \ \textrm{with} \ S^u := (G^u)^{-1} \Omega^u
(G^{u'})^{-1}, \\ G^u := \mathbb{E}[D_{\theta,\zeta}  \psi^u(W,\theta,\zeta,\gamma_0,\alpha_0)], \ \
\Omega^u &:= \mathbb{E}[\psi^u(w,\theta_0,\zeta_0,\gamma_0,\alpha_0) \psi^u(w,\theta_0,\zeta_0,\gamma_0,\alpha_0)^T],
\end{align*}
where $D_{\theta,\zeta}$ denotes the Jacobian matrix with respect to the parameters $\theta$ and $\zeta$.
\end{theorem} %
\section{Empirical Application}
\label{section:Empirical Application}
Between March and September 2008, Oregon conducted lottery draws granting winners the option to enroll in Oregon Health Plan (OHP) Standard, a Medicaid expansion program for uninsured residents aged between 19 and 64, who have limited income and assets. In a seminal study, \citet{finkelstein2012oregon} find positive effects of the insurance coverage on a variety of outcomes: health-care utilization (number of prescription, inpatient, outpatient and ER visits), preventive care (cholesterol and diabetes blood test,  mammogram and pap-smear test) and measures of financial strain (outstanding medical debt, denied care, borrow/skip).\footnote{All replicated results are from the publicly available survey data \citep{finkelstein2013oregon}.} Because not all lottery winners exercised their option to enroll, \citet{finkelstein2012oregon} report both an ITT and a LATE estimate. I focus on the ITT as it is likely to be the relevant parameter\footnote{The ITT is just an ATE where the treatment $D$ is the ``the option to enroll in the health insurance": the robustness approach I discussed carries over to the ITT with only notational changes.} for a policy-maker interested in offering the same intervention in another state.

For a policy-maker considering a Medicaid expansion in their state, extrapolating these effects raises an external validity concern: (i) the state's eligible population may differ from the experiment along covariates that shape treatment-effect heterogeneity, and (ii) some of these covariates will be unobserved outside the experiment. Regarding (i), \citet{finkelstein2012oregon} acknowledge substantial demographic differences of the experimental population relative to the US national average: a higher proportion of Oregon's eligibles are White, older and in poorer health. If these covariates are important determinants of treatment effect heterogeneity, the experiment's results may not be robust to the covariate shifts arising when policy is adopted in other states.  Regarding (ii), many survey-specific health covariates, which are likely to be predictive of treatment effect heterogeneity, are exclusively collected in the experiment and have no counterpart in other states. This makes re-weighting procedures like in \citet{hartman_2020_generalizability} and \citet{hsu2020counterfactual} not directly applicable. 
Instead, I propose to quantify the robustness of each outcome studied in \citet{finkelstein2012oregon} by reporting my robustness metric $\delta^*$, which measures how large a covariate shift would need to be to eliminate the effects of the expansion in a new state. To operationalize $\delta^*$, I use the same covariates examined by \citet{finkelstein2012oregon} in their heterogeneity analysis.

\citet{finkelstein2012oregon}'s exploration of treatment effect heterogeneity focuses on six main covariates: binary indicators for the recipient being White, Female, over the age of 50, having more than High School education, living in an Urban area, and being a Smoker (which the authors use as a proxy for health status). In this context, the joint distribution of the covariates is a $64 \times 1$ vector of probabilities, specifying for each combination of the above categories the relative frequency in the sample.\footnote{The full joint distribution needs to specify $2^6=64$ probabilities, summing up to 1.} Although the authors report being underpowered to make inference on heterogeneous treatment effects for each subgroup, overall treatment-effect heterogeneity is still informative for the external validity of aggregate parameters like the ITT in other states.

\begin{table}[!t]
\centering
\footnotesize
\setlength{\tabcolsep}{3pt}
\renewcommand{\arraystretch}{1.15}
\begin{tabular}{|l|c|c|c|c|}
\hline
\textbf{Health-care Utilization} & Prescriptions & Out-patient & ER & In-patient \\
\hline
Experimental ITT &
$\underset{(0.009)}{0.0259}$ &
$\underset{(0.008)}{0.0624}$ &
$\underset{(0.007)}{0.0075}$ &
$\underset{(0.004)}{0.0029}$ \\
$\delta^*(0)$ &
$\underset{(0.046)}{0.054}$ &
$\underset{(0.152)}{0.424}$ &
$\underset{(0.018)}{0.010}$ &
$\underset{(0.009)}{0.003}$ \\
\hline
\textbf{Preventive care} & Cholesterol & Diabetes & Mammogram & Pap test \\
\hline
Experimental ITT &
$\underset{(0.008)}{0.0358}$ &
$\underset{(0.008)}{0.0282}$ &
$\underset{(0.012)}{0.0565}$ &
$\underset{(0.010)}{0.0492}$ \\
$\delta^*(0)$ &
$\underset{(0.073)}{0.158}$ &
$\underset{(0.053)}{0.128}$ &
$\underset{(0.060)}{0.173}$ &
$\underset{(0.047)}{0.173}$ \\
\hline
\textbf{Financial Strain} & Out-of-pocket & Outstanding & Borrow/Skip & Refused care \\
\hline
Experimental ITT &
$\underset{(0.008)}{-0.0579}$ &
$\underset{(0.008)}{-0.0518}$ &
$\underset{(0.008)}{-0.0433}$ &
$\underset{(0.004)}{-0.0104}$ \\
$\delta^*(0)$ &
$\underset{(0.102)}{0.327}$ &
$\underset{(0.103)}{0.310}$ &
$\underset{(0.120)}{0.174}$ &
$\underset{(0.031)}{0.030}$ \\
\hline
\end{tabular}
\caption[Robustness metric for the health-care utilization, preventative care and financial strain outcomes in \citet{finkelstein2012oregon}.]{$\delta^*(0)$ robustness metric for the health-care utilization and financial strain outcomes in \citet{finkelstein2012oregon}. All estimates use survey weights.}
\label{tab:deltas}
\end{table}

Building on this heterogeneity structure, I evaluate the robustness of the aggregate ITT to covariate shifts. Following \citet{finkelstein2012oregon}'s notation, I consider hypotheses of two forms: (i) $ITT_{j} > \tilde{\tau}$ for an increase in healthcare utilization or preventive care outcome $j$;
(ii) $ITT_{j} < \tilde{\tau}$ for a decrease in financial strain outcome $j$. I report calculations for $\tilde{\tau}=0$ to target sign robustness: $\delta^*(0)$ measures the covariate shift required to overturn the policy relevant sign of the ITT (non-negative for utilization and non-positive for financial-strain outcomes). 
Other thresholds can be accommodated by specifying alternative values of $\tilde{\tau}$.\footnote{For example, setting $\tilde{\tau}=t_j:=z_{1-\alpha}\hat{\sigma}_j$ (where $\hat{\sigma}_j$ is the standard error of $\textrm{ITT}_j$ and $z_{1-\alpha}$ is the quantile of the standard normal) targets statistical significance: $\delta^*(t_j)$ measures the covariate shift required for the effect to become statistically non-significant in a one-sided test at level $\alpha$.} For example, $\delta^*(0)=0.327$ in Table~\ref{tab:deltas} is the smallest KL-distance covariate shift compatible with an increase in ``Out-of-pocket expenses". As a descriptive comparison, the result on ``Out-of-pocket expenses" is relatively more robust than the one on ``prescriptions". The next section provides a cardinal interpretation of the magnitude of $\delta^*(0)$. 

\subsection{Benchmarking and interpreting robustness}
\label{subsec:Benchmarking empirical}
Should a particular value of $\delta^*(0)$ in Table \ref{tab:deltas} be considered high or low? To answer this question, I propose a benchmarking exercise based on American Community Survey (ACS) census data. I compare the magnitude of $\delta^*$ to the KL divergence between the covariate distribution of each U.S. state and the experimental distribution, accounting for lottery eligibility. Eligibility for the Oregon lottery required being currently uninsured, having an income below the poverty line, liquid assets below \$2000 and being a state resident. Using  ACS data, I construct a proxy-eligibility rule mirroring these criteria and obtain a proxy-eligible sub-sample in each state. For five out of six heterogeneity covariates, I can construct the same binary indicators as in \citet{finkelstein2012oregon}, they form $X_c$. The experiment's proxy for health status is not available in the ACS and thus forms $X_e$. Pooling years 2008-2012, I obtain each state's covariate distribution, and compute their KL-divergences from the experiment using $X_c$. The results are in Table \ref{tab:kl_pool_by_state}. 
\begin{table}[!t]
\centering
\scriptsize
\setlength{\tabcolsep}{6pt}
\renewcommand{\arraystretch}{1.1}

\begin{minipage}[t]{0.27\linewidth}\vspace{0pt}
\centering
\begin{tabular}{lr}
\hline
\quad \ State & $D_{KL}(G_s\Vert F)$ \\
\hline
\strow{Alabama}{0.327}
\strow{Alaska}{0.974}
\strow{Arizona}{0.217}
\strow{Arkansas}{0.401}
\strow{California}{0.360}
\strow{Colorado}{0.156}
\strow{Connecticut}{0.458}
\strow{Delaware}{0.603}
\strow{DC}{1.23}
\strow{Florida}{0.227}
\strow{Georgia}{0.517}
\strow{Hawaii}{0.738}
\strow{Idaho}{0.320}
\strow{Illinois}{0.382}
\strow{Indiana}{0.164}
\strow{Iowa}{0.225}
\strow{Kansas}{0.234}
\hline
\end{tabular}
\end{minipage}\hspace{0.05\linewidth}
\begin{minipage}[t]{0.3\linewidth}\vspace{0pt}
\centering
\begin{tabular}{lr}
\hline
\quad \ State & $D_{KL}(G_s\Vert F)$ \\
\hline
\strow{Kentucky}{0.510}
\strow{Louisiana}{0.589}
\strow{Maine}{0.537}
\strow{Maryland}{0.596}
\strow{Massachusetts}{0.511}
\strow{Michigan}{0.240}
\strow{Minnesota}{0.205}
\strow{Mississippi}{1.11}
\strow{Missouri}{0.209}
\strow{Montana}{0.788}
\strow{Nebraska}{0.208}
\strow{Nevada}{0.239}
\strow{New Hampshire}{0.293}
\strow{New Jersey}{0.654}
\strow{New Mexico}{0.423}
\strow{New York}{0.410}
\strow{North Carolina}{0.319}
\hline
\end{tabular}
\end{minipage} \hspace{0.05\linewidth}
\begin{minipage}[t]{0.3\linewidth}\vspace{0pt}
\centering
\begin{tabular}{lr}
\hline
\quad \ State & $D_{KL}(G_s\Vert F)$ \\
\hline
\strow{North Dakota}{0.490}
\strow{Ohio}{0.170}
\strow{Oklahoma}{0.288}
\makebox[1.8em][l]{$\mathord{*}\kern0.05em\mathord{\dagger}\kern0.05em\mathord{\bullet}$}\textcolor{teal}{\textbf{Oregon}} & \textcolor{teal}{0.127} \\
\strow{Pennsylvania}{0.266}
\strow{Rhode Island}{0.482}
\strow{South Carolina}{0.429}
\strow{South Dakota}{1.18}
\strow{Tennessee}{0.213}
\strow{Texas}{0.184}
\strow{Utah}{0.190}
\strow{Vermont}{1.61}
\strow{Virginia}{0.276}
\strow{Washington}{0.137}
\strow{West Virginia}{1.09}
\strow{Wisconsin}{0.260}
\strow{Wyoming}{1.54}
\hline
\end{tabular}
\end{minipage}
\caption{KL divergence between each state distribution $G_s$ and the experimental distribution $F$, computed from the observed component in Eq.\eqref{eq:KL_decomposition} using $X_c$. I compare these values to $\delta^*(0)=0.327$ for \emph{Out-of-pocket expenses}. To reason about the unobserved component associated with $X_e$ (health status), I consider hypothetical magnitudes expressed as percentages of the observed KL. Symbols indicate whether the implied total distance remains below $0.327$: $*$ (0\%), $\dagger$ (20\%), and $\bullet$ (100\%). The percentages correspond to no unobserved contribution, a contribution proportional to the relative number of covariates in $X_e$ versus $X_c$, and equal observed and unobserved contributions, respectively.}
\label{tab:kl_pool_by_state}
\end{table}

Consider the \emph{Out-of-pocket expenses} outcome in Table~\ref{tab:deltas} as an example. The ITT estimate indicates that the option to enroll in OHP reduced the probability of out-of-pocket expenses by $5.79\%$. The robustness metric $\delta^*(0)=0.327$ is the smallest shift in the covariate distribution away from the experimental benchmark needed to bring the ITT to zero. Hence for any target state with distribution $G_{X,s}$ if $D_{KL}(G_{X,s} \Vert F_X)<\delta^*(0)$, the experimental claim $(ITT<0)$ must also hold under $G_X$. How does $\delta^*(0)=0.327$ compare to the empirical KL distance benchmarks in Table~\ref{tab:kl_pool_by_state}? As a first pass, suppose that the second term of Eq.~\eqref{eq:KL_decomposition} is zero, so that the reported KL distances based on the census covariates $X_c$ coincide with KL distance over the full $X$. Under this benchmark, Oregon is the closest state to the experimental covariates.\footnote{The two populations are not identical ($KL>0$) due to pooling census years 2008--2012, survey attrition, and potential self-selection into the lottery \citep{finkelstein2012oregon}.} In particular, the Oregon census population has an observable KL distance of $0.127$, so $\delta^*(0)$ is about $2.5$ times the Oregon-to-experiment distance.

Because health status ($X_e$) is not observed in the census, the reported KL distances using only $X_c$ provide a lower bound for $D_{KL}(G_X\Vert F_X)$. The decomposition in Eq.~\eqref{eq:KL_decomposition} helps us reason about the contribution of the missing covariate $X_e$ to KL by expressing it as a percentage of the observed KL based on $X_c$. In Table \ref{tab:kl_pool_by_state}, I consider three natural benchmarks: 0\% (no contribution), 20\%  (contribution proportional to the number of variables in $X_e$ vs $X_c$), and 100\% (equal contribution of observed and unobserved component).\footnote{Using an observable benchmark to conjecture about an unobservable quantity is reminiscent of the approach in \citet{altonji2005selection} \citet{oster2019unobservable} and discussion in \citep{cinelli2020making,masten2022effect,diegert2025axiomatic}.} Twenty-four states remain below the threshold for the 0\% benchmark, nineteen states for the 20\% benchmark, and Oregon, Colorado and Washington meet the very conservative 100\% benchmark.

\begin{figure}[!t]
\centering
\includegraphics[width=0.90\textwidth]{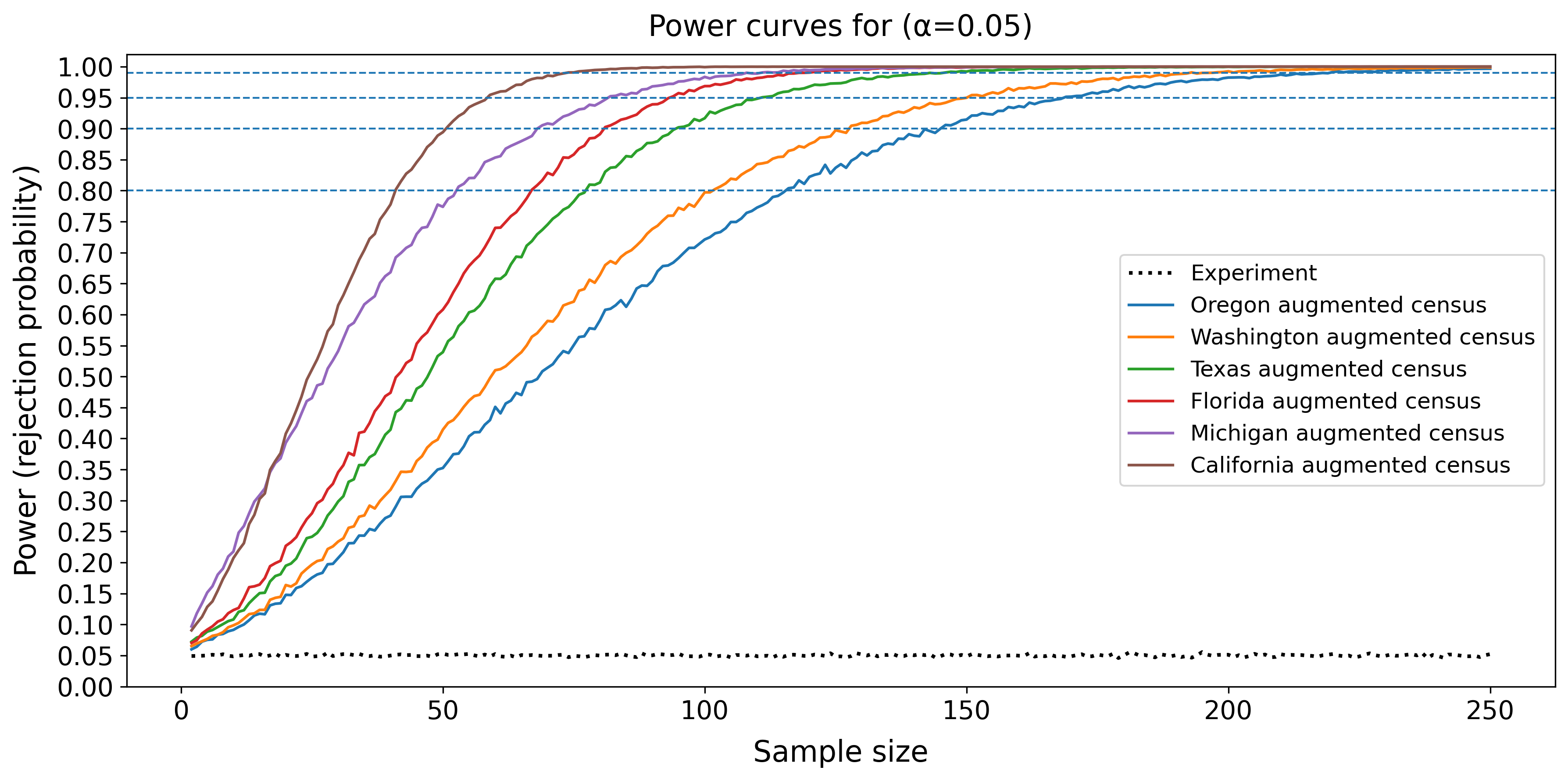}
\caption{Power against sample size. The X-axis shows $m$ for an i.i.d.\ sample drawn from $G_s$ for a few census states. The Y-axis shows the rejection probability for $H_0:G_s=F_X$ against $H_1:G_s\neq F_X$ using the likelihood ratio test at $\alpha=0.05$ (20{,}000 Monte Carlo replications).}
\label{fig:Power curves}
\end{figure}

\subsection{A cardinal interpretation via testing}
I now offer an operational calibration of $\delta^*(0)$ that corroborates in what sense it measures statistical distinguishability. Fix a target environment (state) $s$, and suppose a policymaker could draw an i.i.d.\ sample of size $m$ of covariates from $G_s$. The goal is to test whether a shift from the experimental distribution $F_X$ has occurred using a likelihood ratio (LR) test.
For concreteness, I focus on one example calibrated to the empirical application where both $F$ and $G_s$ are multinomial vectors with 64 cells. We want to test the hypothesis of no covariate shift $H_0:G_s=F$ against $H_1:G_s\neq F$. To do so, we can compare each cell probability ($F_j$) that we would expect under $F$ to the empirical cell probability observed in the sample ($\hat{G}_{s,j}$) and aggregate over all cells.\footnote{Here the empirical distribution $\widehat{G}_s$  is indeed the MLE estimator for $G_s$.} The resulting test statistic can be written as: 
\begin{equation*}
T_{s,m} := 2 m \sum_{j=1}^k \widehat{G}_{s,j} \log\!\left(\frac{\widehat{G}_{s,j}}{F_j}\right)
=2m\,D_{KL}(\widehat G_{s}\|F),
\end{equation*}
Under $H_0$, $T_{s,m}\Rightarrow \chi^2_{k-1}$, so we reject at level $\alpha$ when $T_{s,m}>\chi^2_{k-1,\,1-\alpha}$. Under fixed alternative $G_s\neq F$, $T_{s,m}$ grows linearly in $m$, with its drift governed by the KL distance. Heuristically, larger $D_{KL}(G_s\|F)$ leads to larger $T_{s,m}$ and therefore higher power at a given sample size. 
To show this relationship empirically, I use the census covariate distributions. For each state $s$, I draw $m$ observations from $G_s$, compute the rejection frequency of the LR test at $\alpha=0.05$ (over 20{,}000 Monte Carlo replications), and plot power as a function of $m$. Figure~\ref{fig:Power curves} shows power curves closely tracking the ordering of $D_{KL}(G_s\|F_X)$ across states over a wide range of sample sizes. 

From the simulated power curves, I compute for each state the minimal sample size required to reject $H_0$ at prescribed power benchmarks (0.8, 0.9, 0.95, 0.99) and report them in Table~\ref{tab:Minimal_sample_size}. At each benchmark, states with larger $D_{KL}(G_s\|F_X)$ require fewer observations to detect a shift from $F_X$, consistent with the interpretation of $D_{KL}(G_s\|F_X)$ as a measure of statistical distinguishability from the experiment.
This links directly to the robustness metric. Recall that $\delta^*(0)$ is the smallest KL distance from $F_X$ among covariate shifts that invalidate the experimental claim, attained at $F^*_X$. Define $m^*$ as the minimal sample size needed to detect the shift $F_X$ to $F_X^*$ at a given power benchmark using the LR test. Then $m^*$ provides an operational calibration: any shift that overturns the conclusion must be at least as detectable as $F_X^*$, and therefore should require no more than $m^*$ observations to reach the same power. Equivalently, a shift requiring more than $m^*$ observations at that benchmark cannot overturn the experimental sign conclusion. 
\vskip 0.5cm
\begin{table}[!t]
\centering
\footnotesize
\begin{tabular}{l r cccc}
& & \multicolumn{4}{c}{min $m$: Power} \\
\cmidrule(lr){3-6}
State & $D_{KL}$ & $\geq 80\%$ & $\geq 90\%$ & $\geq 95\%$ & $\geq 99\%$ \\
\midrule
Oregon census      & 0.130 & 116 & 146 & 169 & 220 \\
Washington census  & 0.141 & 102 & 128 & 151 & 193 \\
Texas census       & 0.188 &  78 &  95 & 111 & 146 \\
Florida census     & 0.233 &  67 &  81 &  94 & 118 \\
Michigan census    & 0.246 &  53 &  69 &  82 & 108 \\
California census  & 0.366 &  41 &  51 &  59 &  74 \\
\bottomrule
\end{tabular}

\caption{Effective sample size calculation.}
\label{tab:Minimal_sample_size}
\end{table}
\subsection{Reporting the least-favorable distribution}
To complement the interpretation of $\delta^*$ in Section \ref{subsec:Benchmarking empirical} and to illustrate Theorem~\ref{thm:least favorable features}, I report some of the features of the \textit{least-favorable distribution} $F_X^*$ for an example outcome. Recall that $F_X^*$ is a $64\times 1$ probability vector describing the joint distribution of $X$. I summarize it by reporting the implied marginal proportions of the corresponding indicator variables.
\begin{table}[!t]
\centering
\footnotesize
\begin{tabular}{lccc}
\hline
Variable & $\mu$ (Exp) & $\mu^*$ (LFD) & Mean shift \\
\hline
Age50Plus   & 0.334 & 0.293 & -0.041 \\
Female      & 0.594 & 0.584 & -0.010 \\
MoreThanHS  & 0.333 & 0.279 & -0.054 \\
Smoker      & 0.635 & 0.538 & -0.096 \\
Urban       & 0.749 & 0.765 & +0.017 \\
White       & 0.819 & 0.777 & -0.042 \\
\hline
\end{tabular}
\caption{Marginal means (proportions) under the experimental distribution ($\mu$) and the least-favorable distribution ($\mu^{*}$), and their mean shifts for the outcome ``Out-of-pocket expenses".}
\label{tab:lfd_marginals}
\end{table}
Two remarks are in order. First, because $F_X^*$ specifies the full joint distribution of covariates, the marginals in Table \ref{tab:lfd_marginals} provide a coarse but informative summary of the direction of the implied shift. Second, these mean shifts are closely linked to treatment-effect heterogeneity. The least-favorable distribution shifts weight away from covariate groups (e.g., smokers, older, more educated, and White recipients) for which the conditional ITT is more negative, and toward groups for which the conditional ITT is positive. This reweighting is how the overall ITT can be driven to the threshold $\tilde{\tau}=0$. The largest shift occurs along the Smoker indicator, which, consistent with \citet{finkelstein2012oregon}'s use of it as a proxy for health, is strongly predictive of health-related expenses. This is precisely where reporting $\delta^*(0)$ and the benchmarking exercise are most useful: when a heterogeneity-predictive covariate has no census counterpart and threatens external validity, we can assess the magnitude of the minimal invalidating shift relative to observed shifts in covariates that do have census counterparts in plausible implementation environments.%
\section{Conclusion}
\label{section:Conclusion}
To measure the external validity of experimental claims of the form $(ATE>\tilde{\tau})$ under covariate shifts, I propose a robustness metric $\delta^*(\tilde{\tau})$, defined as the distance to the nearest covariate distribution that overturns the claim. Applied researchers can report $\delta^*(\tilde{\tau})$ as a ``third number'' alongside the point estimate and standard error, and I provide interpretation and calibration exercises that make its magnitude operational. These include benchmarking $\delta^*$ against empirically observed covariate shifts across plausible implementation environments and mapping $\delta^*$ into the sample size required to detect a shift at a given power benchmark. 

Extending the framework to other linear policy parameters is straightforward: just like for the ATE, the closed-form solution yields a simple de-biased GMM procedure that accommodates estimation of the CATE via flexible ML methods (e.g., LASSO, random forests, neural networks). By contrast, for nonlinear distributional parameters, a comparable closed-form characterization and the development of estimation and inference remains an interesting open problem.

\begin{appendix}
\section{Appendix}
\subsection{Proof of Theorem \ref{thm:closed form solution}}
The proof uses the variational characterization of \citet{donsker1975asymptotic}. \begin{lemma}[\citet{donsker1975asymptotic}]
\label{lem:KL_decomposition}
Let $F_X^*$ satisfy $\frac{d F_X^*}{dF_X} = \frac{\exp(-\lambda(\tau(x) - \tilde{\tau}))}{\int_{\mathcal{X}} \exp(-\lambda(\tau(x) - \tilde{\tau})) dF_X} $. For any probability measure $\tilde{F}_X$ such that $\tilde{F}_X \ll F_X$, we have:
\begin{equation*}
\log \left(\int_{\mathcal{X}} \exp(-\lambda(\tau(x) - \tilde{\tau})) d F_X \right) = - \left[ \int_\mathcal{X} \lambda (\tau(x)- \tilde{\tau}) d\tilde{F}_X(x)  + D_{KL}(\tilde{F}_X||F_X) \right] + D_{KL}(\tilde{F}_X||F_X^*) 
\end{equation*}
\end{lemma}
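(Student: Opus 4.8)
The plan is to prove the identity directly as an exact algebraic rearrangement via the chain rule for Radon–Nikodym derivatives, rather than through any optimization argument. First I would introduce the normalizing constant
\[
Z := \int_{\mathcal{X}} \exp(-\lambda(\tau(x)-\tilde{\tau}))\, dF_X(x),
\]
and record that, by Assumption \ref{ass:boundedness}, $\tau$ is bounded $P_X$-almost surely, so the integrand is bounded above and below by strictly positive constants. Hence $Z \in (0,\infty)$ and $F_X^*$ is a genuine probability measure, with $\tfrac{dF_X^*}{dF_X}$ bounded away from $0$ and $\infty$. This is the key structural fact: it makes $F_X^*$ and $F_X$ mutually absolutely continuous, so that $\tilde{F}_X \ll F_X$ is equivalent to $\tilde{F}_X \ll F_X^*$, and therefore $D_{KL}(\tilde{F}_X\|F_X^*)$ is well defined (valued in $[0,+\infty]$) exactly when $D_{KL}(\tilde{F}_X\|F_X)$ is.

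Next, on the set where the densities are defined I would apply the chain rule $\tfrac{d\tilde{F}_X}{dF_X^*} = \tfrac{d\tilde{F}_X}{dF_X}\cdot\tfrac{dF_X}{dF_X^*}$ $\tilde{F}_X$-almost everywhere, and substitute $\tfrac{dF_X}{dF_X^*} = Z\exp(\lambda(\tau(x)-\tilde{\tau}))$, which is immediate from the definition of $F_X^*$. Taking logarithms gives
\[
\log\frac{d\tilde{F}_X}{dF_X^*}(x) = \log\frac{d\tilde{F}_X}{dF_X}(x) + \log Z + \lambda(\tau(x)-\tilde{\tau}).
\]
Integrating both sides against $d\tilde{F}_X$ yields $D_{KL}(\tilde{F}_X\|F_X^*) = D_{KL}(\tilde{F}_X\|F_X) + \log Z + \int_{\mathcal{X}}\lambda(\tau(x)-\tilde{\tau})\,d\tilde{F}_X(x)$, and isolating $\log Z$ reproduces the claimed identity exactly.

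The main obstacle is purely measure-theoretic bookkeeping: I must justify the factorization of Radon–Nikodym derivatives $\tilde{F}_X$-almost everywhere (legitimate precisely because of the mutual absolute continuity established above) and confirm the rearrangement is valid even when the divergences are infinite. Boundedness of $\tau$ does the work twice here: it keeps $\log Z$ finite, and it makes the correction term $\int_{\mathcal{X}}\lambda(\tau(x)-\tilde{\tau})\,d\tilde{F}_X$ finite (bounded in absolute value by $\lambda(M+|\tilde{\tau}|)$ for any probability measure $\tilde{F}_X$), so the two KL terms differ by a finite quantity and the equation holds cleanly in the extended reals. Finally I would flag the payoff for Lemma \ref{lem:closed form solution}: restricting to feasible $\tilde{F}_X$ satisfying $\int_{\mathcal{X}}\tau\,d\tilde{F}_X = \tilde{\tau}$ annihilates the correction term, leaving $D_{KL}(\tilde{F}_X\|F_X) = -\log Z + D_{KL}(\tilde{F}_X\|F_X^*) \ge -\log Z$, with equality if and only if $\tilde{F}_X = F_X^*$ ($P_X$-a.e.) by strict positivity of the KL divergence. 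This simultaneously identifies $F_X^*$ as the unique minimizer and gives the closed form $\delta^* = -\log Z$.
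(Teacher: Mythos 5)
Your proof is correct and follows essentially the same route as the paper's: both decompose $\log\bigl(\tfrac{d\tilde{F}_X}{dF_X^*}\bigr)$ via the chain rule for Radon--Nikodym derivatives under $\tilde{F}_X \ll F_X^* \ll F_X$, integrate against $d\tilde{F}_X$, and rearrange to isolate $\log Z$. Your extra bookkeeping (mutual absolute continuity from boundedness of $\tau$, and finiteness of the correction term so the identity holds even when the KL divergences are infinite) is a welcome tightening of details the paper leaves implicit, but it is not a different argument.
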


We can now prove Theorem \ref{thm:closed form solution}. By the Radon-Nikodym theorem, $\frac{d F_X'}{d F_X}$ exists and $\textrm{supp}\left(\frac{d F_X'}{d F_X}\right) \subset \mathcal{X}$. Then, the optimization problem in Eq.\ref{eq:Inf_Problem}-\ref{eq:Constraint} is equivalent to:
\begin{align*}
\inf_{F_X': \ P_X' \ll P_X} & D_{KL}(F_X'||F_X)  \\
s.t. \ & \int_{\mathcal{X}} \tau(x) \frac{d F_X'}{d F_X} dF_X(x) = \tilde{\tau}; \ \quad P_X'(\mathcal{X}) = 1.
\end{align*}

\begin{proof}
i) From Lemma \ref{lem:KL_decomposition} we have:
\begin{equation*}
\log \left( \int_{\mathcal{X}} \exp (-\lambda(\tau(x) - \tilde{\tau})) d F_X \right) = D_{KL}(\tilde{F}_X||F_X^*) - D_{KL}(\tilde{F}_X||F_X) - \int_{\mathcal{X}} \lambda(\tau(x) - \tilde{\tau}) d \tilde{F}_X.
\end{equation*}
Since the term $\log \left( \int_{\mathcal{X}} \exp (-\lambda(\tau(x) - \tilde{\tau}) d F_X \right)$ does not depend on $\tilde{F}_X$ we must have:
\begin{align*}
\arg \min_{\tilde{F}_X \ll F_X} D_{KL}(\tilde{F}_X||F_X^*) &= \arg \max_{\tilde{F}_X \ll F_X} - \int_{X} \lambda (\tau(x) - \tilde{\tau}) d \tilde{F}_X - D_{KL}(\tilde{F}_X||F_X) \\
&= \arg \min_{\tilde{F}_X \ll F_X} \int_{X} \lambda (\tau(x) - \tilde{\tau}) d \tilde{F}_X + D_{KL}(\tilde{F}_X||F_X),
\end{align*}
but clearly $F_X^* = \arg \min_{\tilde{F}_X \ll F_X} D_{KL}(\tilde{F}_X||F_X^*)$ so we must have 
\begin{align*}
F_X^* = \arg \min_{\tilde{F}_X \ll F_X} D_{KL}(\tilde{F}_X||F_X) + \lambda \int_{X} (\tau(x) - \tilde{\tau}) d \tilde{F}_X.    
\end{align*}
ii) Because $D_{KL}(F_X^*||F_X^*) = 0$ the value of the minimization problem:
\begin{align*}
& \ \ \ \min_{\tilde{F}_X \ll F_X} D_{KL}(\tilde{F}_X||F_X) + \lambda \int_{X} (\tau(x) - \tilde{\tau}) d \tilde{F}_X \\
&= \min_{\tilde{F}_X \ll F_X} D_{KL}(\tilde{F}_X|| F_X^*) - \log \left( \int_{\mathcal{X}} \exp (-\lambda(\tau(x) - \tilde{\tau})) d F_X \right) \\
&= -\log \left( \int_{\mathcal{X}} \exp (-\lambda(\tau(x) - \tilde{\tau})) d F_X \right).
\end{align*}
\end{proof}
\subsection{Auxiliary lemmas}
\noindent Proving Theorem \ref{thm:Asymptotic normality} requires some lemmas. Their derivations are in \citet{spini_robustness}.
\begin{lemma}[\citet{kennedy2020sharp}] 
\label{lem:Kennedy} Let $\hat{g}(\cdot)$ be a function estimated from the $I_k^c$ sample and evaluated on the $I_k$ sample. Then $(\mathbb{P}_n - \mathbb{P}) (\hat{g} - g_0) = O_P \left( \frac{\lvert \hat{g} - g_0 \rvert }{\sqrt{n}} \right)$. 
\end{lemma}

\begin{lemma}
\label{lem:Frechet differentiability}
For $\psi(\cdot)$ in Equation \eqref{eq:De-biased GMM} and $\bar{\psi}(\theta,\gamma,\alpha) = \mathbb{E}[\psi(w,\theta,\gamma,\alpha)]$ we have: 
\begin{enumerate}
\item $\bar{\psi}(\theta_0,\gamma,\alpha_0)$ is twice continuously Fr\'echet differentiable in a neighborhood of $\gamma_0$. 
\item If $\Lambda$ is bounded then $\forall \theta \in \Theta$, $\bar{\psi}(\theta,\gamma,\alpha_0) \leq \bar{C} \lVert \gamma - \gamma_0 \rVert^2_{L_2}$.
\end{enumerate}
\end{lemma}

\begin{lemma}[Jacobian consistency]
\label{lem:Jacobian Consistency}
For the Jacobian G defined as:
\begin{equation*}
G = \mathbb{E}[D \psi(w,\theta_0,\gamma_0,\alpha_0)] = \mathbb{E} \left[\frac{\partial}{\partial \theta} \psi(w,\theta_0,\gamma_0,\alpha_0) \right]    
\end{equation*}
and $\hat{\theta} \overset{p}{\rightarrow} \theta_0$ we have $\lVert \frac{\partial \hat{\psi}(\hat{\theta})}{\partial \theta} - G \rVert = o_P(1)$. If $G^{-1}$ exists then $\lVert \hat{G}^{-1} - G^{-1}  \rVert= o_P(1)$. 
\end{lemma}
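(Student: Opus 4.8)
The plan is to establish the Jacobian consistency by the standard recipe for cross-fitted GMM: decompose the error into a \emph{bias/continuity} term and an \emph{empirical-process} term, and control each using the structural features already in place — boundedness of $\tau$ (Assumption \ref{ass:boundedness}), strict overlap (Assumption \ref{ass:Unconfoundedness} ii)), the $L^2$ rates (Assumption \ref{Ass:rates of convergence}), cross-fitting, and the domination/smoothness computed in Lemma \ref{prop:Frechet differentiability}. First I would fix notation: write $\hat D := \partial_\theta \hat\psi(\hat\theta)$ as the cross-fitted average $\frac{1}{K}\sum_{k}|I_k|^{-1}\sum_{i\in I_k} D_\theta\psi(W_i,\hat\theta,\hat\gamma_{-k},\hat\alpha_{-k})$, and set $D(\theta,\gamma,\alpha):=\mathbb{E}[D_\theta\psi(W,\theta,\gamma,\alpha)]$ so that $G=D(\theta_0,\gamma_0,\alpha_0)$. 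Adding and subtracting the (fold-wise) conditional mean gives
$$\hat D - G = \big[\hat D - D(\hat\theta,\hat\gamma,\hat\alpha)\big] + \big[D(\hat\theta,\hat\gamma,\hat\alpha) - D(\theta_0,\gamma_0,\alpha_0)\big].$$

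For the continuity term $D(\hat\theta,\hat\gamma,\hat\alpha)-G$, the key observation is that $D_\theta\psi$ is, up to polynomial factors in $(\tau-\tilde{\tau})$, a product of $\exp(-\lambda(\tau-\tilde{\tau}))$ and the propensity-weighted residual $d(y-\gamma_{1})/\pi-(1-d)(y-\gamma_{0})/(1-\pi)$. Under Assumption \ref{ass:boundedness} we have $|\tau|\le M$ a.s., so the exponential factor is bounded for $\lambda$ in a compact neighborhood of $\lambda_0$ (compactness of $\Lambda$ follows from finiteness of $\lambda_0$ under Assumption \ref{ass:Non-emptiness}; cf. the Remark after Lemma \ref{prop:Frechet differentiability}); under Assumption \ref{ass:Unconfoundedness} ii) the weights $1/\pi$ and $1/(1-\pi)$ are bounded. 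Hence $D_\theta\psi$ admits an integrable envelope, and dominated convergence yields joint continuity of $D(\theta,\gamma,\alpha)$ in $\theta$ and $L^2$-continuity in $(\gamma,\alpha)$. In fact the derivative of $D$ with respect to $\gamma$ is precisely the bounded bilinear object exhibited in the proof of Lemma \ref{prop:Frechet differentiability}, so a Hölder bound gives $\lVert D(\theta,\hat\gamma,\hat\alpha)-D(\theta,\gamma_0,\alpha_0)\rVert \le C\big(\lVert\hat\gamma-\gamma_0\rVert_{L^2}+\lVert\hat\alpha-\alpha_0\rVert_{L^2}\big)$. Consistency $\hat\theta\overset{p}{\rightarrow}\theta_0$ together with Assumption \ref{Ass:rates of convergence} then drives the whole term to $o_P(1)$.

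For the empirical-process term I would reuse the cross-fitting argument from Lemma \ref{lem:square root consistency}. Conditional on $I_k^c$, the estimates $\hat\gamma_{-k},\hat\alpha_{-k}$ are fixed and the evaluation points in $I_k$ are independent of them, so \citet{kennedy2020sharp} (Lemma 2) shows that replacing $(\hat\gamma_{-k},\hat\alpha_{-k})$ by $(\gamma_0,\alpha_0)$ inside the centered fold average costs only $O_P\big(\lVert\hat\gamma_{-k}-\gamma_0\rVert_{L^2}+\lVert\hat\alpha_{-k}-\alpha_0\rVert_{L^2}\big)=o_P(1)$. What remains, $\partial_\theta\hat\psi(\hat\theta,\gamma_0,\alpha_0)-D(\hat\theta,\gamma_0,\alpha_0)$, is handled by a uniform law of large numbers over the compact set $\Theta$: the class $\{D_\theta\psi(\cdot,\theta,\gamma_0,\alpha_0):\theta\in\Theta\}$ is dominated and continuous in $\theta$, hence Glivenko–Cantelli, so $\sup_{\theta\in\Theta}\lVert\partial_\theta\hat\psi(\theta,\gamma_0,\alpha_0)-D(\theta,\gamma_0,\alpha_0)\rVert=o_P(1)$, and a fortiori the value at $\theta=\hat\theta$ is $o_P(1)$.

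The main obstacle is the middle step — controlling how first-step estimation error in $(\hat\gamma,\hat\alpha)$ propagates into the Jacobian. The essential simplification is that, unlike the asymptotic-normality argument (Lemma \ref{lem:square root consistency}), here I only need \emph{consistency}, so Neyman orthogonality is not required and the mild $L^2$ rates of Assumption \ref{Ass:rates of convergence} are more than enough; all the work is done by domination, which rests squarely on boundedness of $\tau$ and strict overlap keeping the exponential and inverse-propensity factors under control. Once these envelopes are secured, cross-fitting plus dominated convergence make the remainder of the argument routine.
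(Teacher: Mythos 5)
Your proof is correct in substance but takes a genuinely different route from the paper. The paper's proof first observes that $\mathbb{E}[\partial_\theta \phi(W,\theta,\gamma_0,\alpha_0)]=0$ by the law of iterated expectations, so that $G$ reduces to $\mathbb{E}[\partial_\theta g(W,\theta_0,\gamma_0)]$, and then simply verifies the hypotheses of Lemma 17 in \citet{chernozhukov2020locally}: continuous differentiability of the entries of $\partial_\theta g$ in $\theta$ at $\theta_0$, local boundedness/integrability of those derivatives (terms of the form $\exp(-\lambda(\tau-\tilde{\tau}))(\tau-\tilde{\tau})^2$ and $\exp(-\lambda(\tau-\tilde{\tau}))(\tau-\tilde{\tau})^3$, dominated thanks to boundedness of $\tau$), and $L^2$-continuity of the integrated derivative in $\gamma$. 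You instead unpack what that cited lemma does internally: a decomposition into an empirical-process term, controlled by cross-fitting via Kennedy's Lemma 2 plus a Glivenko--Cantelli argument over compact $\Theta$, and a continuity/bias term, controlled by dominated convergence using envelopes supplied by boundedness of $\tau$ and strict overlap. The ingredients coincide --- your domination and $L^2$-continuity steps are exactly the paper's verification of Assumption 4 ii) and iii) of \citet{chernozhukov2020locally} --- but your argument is self-contained where the paper's delegates, and your remark that Neyman orthogonality is not needed for this step is accurate. What the paper's route buys is brevity and the explicit simplification $G=\mathbb{E}[\partial_\theta g]$ (the correction term drops out of the Jacobian in expectation), which you never exploit and which would shorten your continuity analysis; what your route buys is transparency about where cross-fitting and the mild $L^2$ rates actually enter. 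One technical point to tighten: in your Kennedy-lemma step the integrand contains $\hat\theta$, which is computed from the full sample and hence is not a fixed function conditional on $I_k^c$; you must carry the supremum over $\theta\in\Theta$ (which you already invoke for the ULLN piece) through that step as well, which is harmless given your envelopes but should be stated explicitly.
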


\begin{lemma}[$\sqrt{n}$ - consistency]
\label{lem:square root consistency}
Let Assumption \ref{Ass:rates of convergence} hold. Then
\begin{equation*}
\frac{1}{\sqrt{n}} \sum_{k=1}^K \sum_{i \in I_k} g(W_i,\theta,\hat{\gamma}_{-k}) + \phi(W_i,\tilde{\theta}_{-k},\hat{\gamma}_{-k},\hat{\alpha}_{-k}) = \frac{1}{\sqrt{n}} \sum_{i=1}^n \psi(W_i,\theta,\gamma_0,\alpha_0) + o_P(1) \end{equation*}
\end{lemma}

\subsection{Proof of Theorem \ref{thm:Asymptotic normality}}
Denote $\hat{G} = \frac{\partial \hat{\psi}(w,\hat{\theta},\hat{\gamma})}{\partial \theta}$. First note that by Lemma \ref{lem:Jacobian Consistency} we have $\lVert \hat{G}^{-1} - G^{-1} \rVert = o_P(1)$. 
Now by the central limit theorem and Lemma \ref{lem:square root consistency} we have:
\begin{equation*}
\frac{1}{|K|} \sum_{k \in K} \Big(\frac{1}{\sqrt{n}} \sum_{i \in I_k} g(W_i,\theta,\gamma_0) + \phi(W_i,\tilde{\theta}_{-k},\hat{\gamma}_{-k},\hat{\alpha}_{-k}) \Big) 
\overset{d}{\rightarrow}\mathcal{N}(0,\Omega)
\end{equation*}
Finally a standard GMM Taylor linearization gives the desired result:
\begin{align*}
\sqrt{n} \begin{bmatrix} \nu - \nu_0 \\
\lambda - \lambda_0
\end{bmatrix} =& \Bigg \{ \frac{\partial}{\partial \theta} \hat{\psi}(w,\theta_0,\hat{\gamma},\hat{\alpha})' V \frac{\partial}{\partial \theta} \hat{\psi}(w,\theta_0,\hat{\gamma},\hat{\alpha}) \Bigg\}^{-1} \frac{\partial}{\partial \theta} \hat{\psi}(w,\theta_0,\hat{\gamma},\hat{\alpha})' V \\
\times & \frac{1}{|K|} \sum_{k \in K} \Big(\frac{1}{\sqrt{n}} \sum_{i \in I_k} g(W_i,\theta,\hat{\gamma}_{-k}) + \phi(W_i,\tilde{\theta}_{-k},\hat{\gamma}_{-k}) \Big) \\
&= (G'VG)^{-1} G'V \left( \frac{1}{|K|} \sum_{k \in K} \frac{1}{\sqrt{n}} \sum_{i \in I_k} \psi(W_i,\theta,\gamma_0,\alpha_0) \right) + o_P(1) \overset{d}{\rightarrow} \mathcal{N}(0,S)
\end{align*}

\subsection{Proof of theorem \ref{thm:least favorable features}}

\noindent The proof of Theorem \ref{thm:least favorable features} follows the same structure of Theorem \ref{thm:Asymptotic normality} and is omitted.
\end{appendix}

\bibliography{bibliography_main}

\end{document}